\newtheorem{theorem}{Theorem}
\newtheorem{remark}{Remark}
\newtheorem{definition}{Definition}
\newtheorem{assumption}{Assumption}
\newcommand{\mb}{\mathbf}
\newcommand{\mc}{\mathcal}
\newcommand{\mbb}{\mathbb}
\newcommand{\mr}{\mathrm}
\newcommand{\tr}{\textrm}
\newcommand{\mf}{\mathfrak}
\newcommand{\ZN}{\mathbb Z/N\mathbb Z}
\newcommand{\ZQ}{\mathbb Z/Q\mathbb Z}
\newcommand{\ZNm}{(\mathbb Z/N^2\mathbb Z)^\ast}
\renewcommand{\Pr}{\mathrm{Pr}}
\DeclarePairedDelimiter\floor{\lfloor}{\rfloor}
\def\BibTeX{{\rm B\kern-.05em{\sc i\kern-.025em b}\kern-.08em
    T\kern-.1667em\lower.7ex\hbox{E}\kern-.125emX}}
\begin{document}

{\vspace{1cm}
\title{Private Weighted Sum Aggregation}}
\author{Andreea B. Alexandru, \IEEEmembership{Student Member, IEEE} and George J. Pappas, \IEEEmembership{Fellow,~IEEE}
\thanks{The authors are with the Department of Electrical and Systems Engineering, University of Pennsylvania, Philadelphia PA 19105 USA (e-mail: \{aandreea,pappasg\}@seas.upenn.edu).%
}}

\maketitle

\begin{abstract}
As large amounts of data are circulated both from users to a cloud server and between users, there is a critical need for privately aggregating the shared data. This paper considers the problem of \emph{private weighted~sum aggregation} with secret weights, where an aggregator wants to compute the weighted sum of the local data of some agents. Depending on the privacy requirements posed on the weights, there are different secure multi-party computation schemes exploiting the information structure. First, when each agent has a local private value and a local private weight, we review private sum aggregation schemes. 
Second, we discuss how to extend the previous schemes for when the agents have a local private value, but the aggregator holds the corresponding weights. 
Third, we treat a more general case where the agents have their local private values, but the weights are known neither by the agents nor by the aggregator; they are generated by a system operator, who wants to keep them private. We give a solution where aggregator obliviousness is achieved, even under collusion between the participants, and we show how to obtain a more efficient communication and computation strategy for multi-dimensional data, by batching the data into fewer ciphertexts. 
Finally, we implement our schemes and discuss the numerical results and efficiency improvements. 
\end{abstract}

\section{Introduction}\label{sec:introduction}
\IEEEPARstart{T}{he} recent technological advances in communication speed and deployment of millions of devices have fostered the adoption of distributed computing frameworks. In turn, such frameworks require aggregating services in order to utilize the data collected for specific causes. Even as a step in distributed algorithms, aggregation shifts from a decentralized nature that inherently guarantees more privacy, to a centralized approach that poses severe privacy challenges. 

Of particular interest is the general problem of weighted sum aggregation, that we explore in this paper, in which an aggregating party needs to collect and sum contributions from a number of agents--the contributions consist of some local data weighted by some other relevant quantities. 
There is a wealth of examples spanning various research areas that require the computation of weighted aggregates:
\mbox{\textbf{(a)}}~Decentralized and cooperative linear control for multi-agent systems~\cite{Corfmat1976decentralized,Wang1998robust,Lin2011};
\mbox{\textbf{(b)}}~Graph neural networks~\cite{Zhou2018graph,Kipf2016semi} and collaborative inference~\cite{Teerapittayanon2017distributed,Gupta2018distributed};	
\mbox{\textbf{(c)}}~Average consensus~\cite{Fagnani2009average,Yang2013consensus};	
\mbox{\textbf{(d)}}~Federated learning~\cite{Mcmahan2017federated,Yang2019federated}, aggregation of linear inference results;
\mbox{\textbf{(e)}}~Energy price aggregation and management~\cite{Cleveland2000aggregation,Wang2019incentivizing}, vehicle tolls collection~\cite{Levinson2003model,Balasch2010pretp}.	

Each of the above examples can pose different privacy requirements on the local data of the agents, as well as~on the corresponding weights. For example, in the context of federated learning, the model is locally trained by the agents and the aggregating server needs to compute the mean model without obtaining the local models. In some price collection instances, the prices can depend on private information known at the aggregator and can vary dynamically, so the aggregator knows the price weights, while the agents do not. Finally,~there are cases where a system operator has invested resources~into computing the control gains for a distributed system and~wants to keep them private from both the agents and the aggregator, who needs to compute a linear control without knowing neither the local agent's states nor the gains. Similar privacy requirements are in place for secure inference, where a service provider has trained a proprietary model on its own data and wants to keep it private while allowing it to be deployed.

In the context of \textbf{(a)}, linear distributed control with homomorphically encrypted gains was addressed by~\cite{Darup2019encrypted,Alexandru2019encrypted,Alexandru2020private}, with~\cite{Alexandru2020private} touching also on \textbf{(b)}. We will elaborate and improve on these works in Section~\ref{sec:private_weighted_sum_aggregation}. Concerning \textbf{(c)}, there is a body of research that targets the privacy of the local data of the agents achieving consensus, using partially homomorphic encryption or differential privacy: see~\cite{Ruan2019secure,Hadjicostis2018privacy,Hadjicostis2020privacy,Nozari2017differentially} and the references within. For \textbf{(d)} and \textbf{(e)}, works such as~\cite{Li2010secure,Acs2011have,Kursawe2011privacy,Erkin2013privacy,Bonawitz2017practical} provide private solutions for private sum aggregation, touching on a large base of cryptographic tools, such as secret sharing, threshold homomorphic encryption, differential privacy. 

Given the wide spread of private weighted sum aggregation problems with different privacy constraints, our first~contribution is to review their solutions and give a unified~formulation. We intend for this paper to serve as~a~guide~for~choosing an efficient particular solution based on knowledge distribution and privacy demands. 
Our second contribution is~to~offer~a~private solution for the general case of weighted~aggregation,~where weights are hidden from all parties, and propose three optimi-zations. 
Our third contribution is to implement~and~extensively demonstrate the runtime and communication improvements. 

We first review existing solutions for \emph{private sum aggregation} (when weights are known by the agents, but not by the aggregator). Most of the previously mentioned literature falls into this category. Second, we describe the \emph{private weighted sum aggregation with centralized weights} (when weights are known at the aggregator, but not at the agents), which can be solved from the lens of functional encryption for inner products. 
Third, we give a solution for the \emph{private~weighted~sum aggregation with hidden weights} (neither agents nor aggregator know the weights) and improve it compared to the solution proposed in~\cite{Alexandru2019encrypted} in terms of security: larger collusion threshold, communication: fewer messages exchanged, and runtime: fewer operations, in the case of multi-dimensional data. We also propose multi-dimensional extensions for the first two schemes.

\subsubsection*{Notation}
We use bold-face lower case for vectors, e.g.,~$\mb x$, and bold-face upper case for matrices, e.g., $\mb A$. For a positive~integer~$n$, let 
$[n]:=\{1,2,\ldots,n\}$. A quantity $(\cdot)_i$ refers to agent~$i$~and~a quantity $(\cdot)_a$ refers to the aggregator. By~$\mb x^{[j]}$, we refer to the \mbox{$j$-th} element of vector $\mb x$ and by $\mb W^{[jl]}$, to the element of matrix $\mb W$ on the $j$-th row and $l$-th column. 
$\mbb Z$~denotes the set of integers, $\mbb Z/N\mbb Z$ denotes the additive group of integers modulo $N$ and $(\mbb Z/N\mbb Z)^\ast$ denotes the multiplicative group of integers modulo $N$. 
$\kappa$ is the security parameter. We denote the Paillier encryption primitive by $\mr E(\cdot)$ and the 
decryption primitive by $\mr D(\cdot)$. A function $\eta:\mbb Z_{\geq 1}\rightarrow \mbb R$ is negligible if $\forall c\in \mbb R_{> 0}, \exists n_c\in\mbb Z_{\geq 1}$ such that for all integers $n\geq n_c$, we have $|\eta(n)|\leq n^{-c}$. $\phi(N)$ denotes Euler's totient function; for $N=pq$, with $p,q$ primes, $\phi(N)=(p-1)(q-1)$.  
A value $x\in\mbb Q_{l_i,l_f}$ represents a rational value $x = x_i.x_f$ with $l_i$ bits for the integer part and $l_f$ bits for the fractional part.

\section{Problem statement}\label{sec:problem_statement}
We investigate an aggregation problem of weighted contributions, depicted schematically in Figure~\ref{fig:diagram}. We consider a system with $M$ agents and one aggregator. Each agent $i\in[M]$ has some data $\mb x_i(t)\in\mbb R^{n_i}$ at time $t$ and the aggregator wants to compute an aggregate of the data in the system $\mb x_a(t)\in\mbb R^{n_a}$, where $\mb W_i\in\mbb R^{n_a\times n_i}$ are constant weights designated for the local data of agent~$i$:
\begin{equation}\label{eq:aggregation}
	\mb x_a(t) =  \sum_{i=1}^M \mb W_i \mb x_i(t).
\end{equation} 

At every time step, each agent $i\in [M]$ has access to its local data $\mb x_i(t)$, either by direct measurement (e.g., location, energy consumption) or by computation (e.g., gradient of the model, local prediction). 
We consider three types of privacy requirements for private weighted sum aggregation.
\begin{figure}[b]
\begin{center}
    \vspace{-10pt}
    \includegraphics[width=0.43\textwidth]{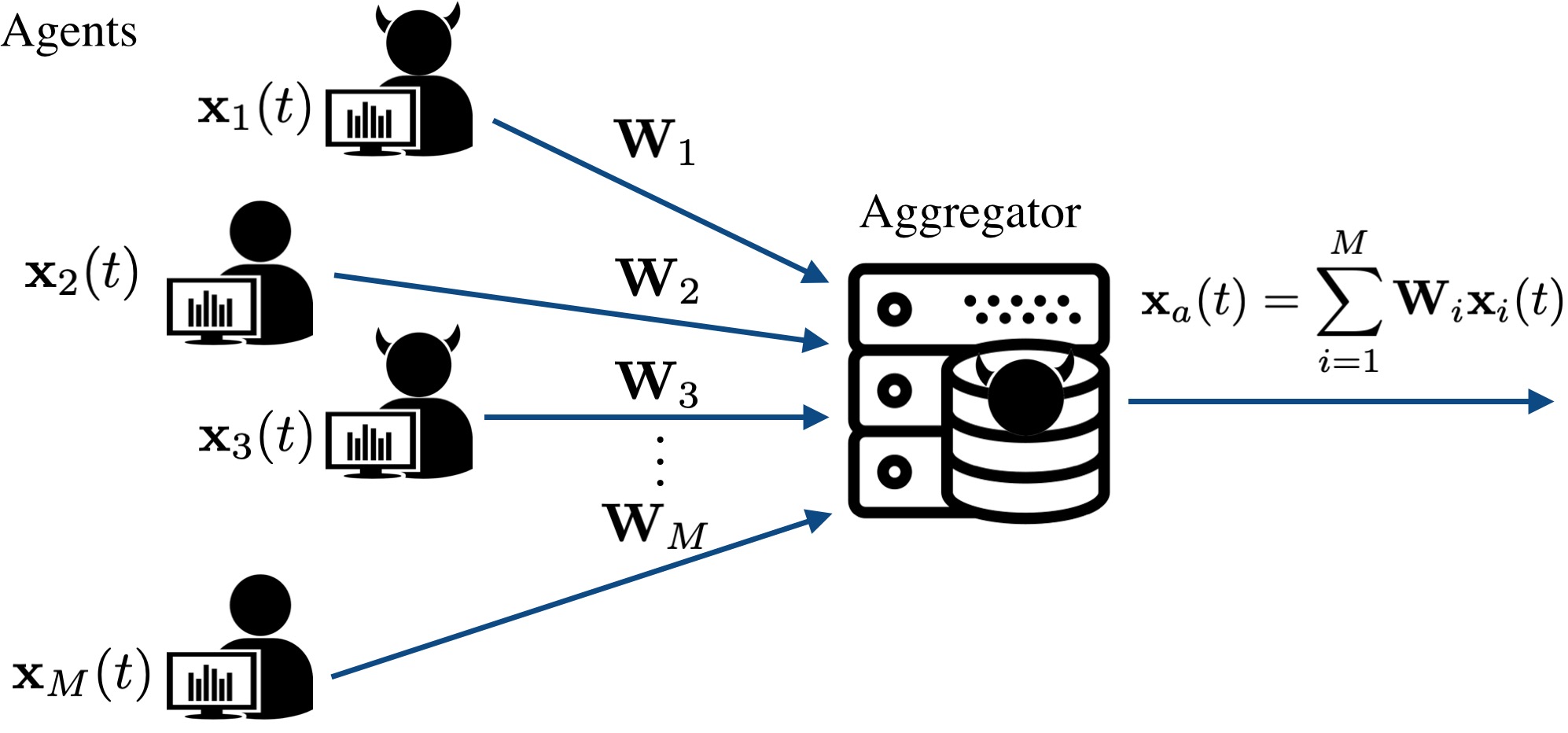}
  \caption{Diagram of the private weighted sum aggregation. Some of the participants can be corrupted and disclose their private data.}
  \vspace{-4pt}
   \label{fig:diagram}
 \end{center}
\end{figure}

\subsubsection*{Private Weighted Sum Aggregation with hidden weights} 
This case requires the strongest privacy guarantees:
\begin{enumerate}
    \item[(a)] Agent~$i$ should not infer anything about the other agents' local data $\mb x_j(t)$, $j\in[M]\setminus \{i\}$ or about the aggregator's result $\mb x_a(t)$ or about the weights $\mb W_i$, $i\in[M]$, including partial information such as $\mb W_i\mb x_i(t)$.
    \item[(b)] The aggregator should only be able to compute $\mb x_a(t)$ and should not infer anything else about the agents' local data $\mb x_i(t)$ or the weights $\mb W_i$, $i\in[M]$, including partial information such as $\mb W_i\mb x_i(t)$.    
\end{enumerate}
\subsubsection*{Private Sum Aggregation}
In this case, we replace (a) by:
\begin{enumerate}
    \item[(a)] Agent~$i$ knows its corresponding weight $\mb W_i$ and should not infer anything about the other agents' local data and weights $\mb x_j(t),\mb W_j,j\in[M]\setminus \{i\}$, including partial information such as $\mb W_j\mb x_j(t)$, or about the aggregator's result $\mb x_a(t)$.
\end{enumerate}
\subsubsection*{Private Weighted Sum Aggregation with centralized weights}
In this case, we replace (b) by:
\begin{enumerate}
    \item[(b)] The aggregator knows the weights $\mb W_i$ and should only be able to compute $\mb x_a(t)$ and should not infer anything else about the agents' local data $\mb x_i(t)$, $i\in[M]$, including partial information such as $\mb W_i\mb x_i(t)$.    
\end{enumerate}

These privacy requirements should hold even under collusion between the aggregator and at most $M-2$ agents, or between $M-1$ agents, i.e., a coalition should not be able to infer the private data of the remaining honest participants. 

We consider computationally bounded adversaries that are \textit{honest-but-curious}, which means that an adversary wants to learn the private data of the honest agents, without diverging from the established protocol. Such a model is chosen because the aggregator is interested in obtaining the correct result of the computation, and for instance, in applications involving pricing, the agents would be fined in they cheat. 

The goal here is to protect the privacy of the inputs~and intermediary computations, but reveal the output to the aggregator. As a side note, all the presented algorithms can support differential privacy, in case the output should also be protected.

In describing the schemes in Sections~\ref{sec:private_sum_aggregation},~\ref{sec:known_weighted_sum_aggregation} and~\ref{sec:private_weighted_sum_aggregation}, we focus on scalar data $w_i,x_i,x_a\in\mbb Z_{\geq 0}$, for $i\in[M]$. After illustrating the functionalities, we provide methods for dealing with multi-dimensional rational data in Sections~\ref{sec:packed_Paillier}, \ref{sec:multidim}. 
\section{Private sum aggregation}\label{sec:private_sum_aggregation}
Private sum aggregation ($\mr{pSA}$), introduced 
in~\cite{Shi2011privacy, Rastogi2010differentially}, enables an untrusted aggregator to compute the sum of the~private data contributed by agents, without learning their individual contributions. Improvements in terms of efficiency and functionality of $\mr{pSA}$ have been proposed in~\cite{Chan2012privacy,Joye2013scalable,Benhamouda2016new,Becker2018revisiting,Bonawitz2017practical,Tjell2020private} and the references within. 
The formal definition of \emph{aggregator obliviousness} that $\mr{pSA}$ schemes have to satisfy (informally described in Section~\ref{sec:problem_statement}) was introduced in~\cite{Shi2011privacy} and 
is given as a cryptographic game between an adversary and a challenger, similar to the game we describe in Appendix~\ref{app:privacy_def}. 

When the weights $w_i$ are known to the agents, equation~\eqref{eq:aggregation} can be computed privately with a $\mr{pSA}$ scheme. 
Specifically, in every time step, denoted by $t\in\mbb Z_{\geq 0}$, each agent $i\in[M]$ holds a private value $x_i(t)$ and $w_{i}$. Define $v_{i}(t):= w_{i}x_i(t)$. The aggregator wants to compute the aggregate statistics over the private values: 
$x_a(t) =\sum_{i\in[M]}v_{i}(t)$. 

Let $l$ denote the maximum number of bits of $x_i(t),w_i$, $\forall i\in[M]$. An assumption we make for the rest of the paper is:
\begin{assumption}\label{assum:no_overflow}
For each time step~$t$, after discretization, $x_i(t), w_i, w_ix_i(t),x_a(t)< N$, $\forall i\in[M]$, i.e., there is no overflow for $N$ specified in each scheme.\hfill $\diamond$
\end{assumption}

The most intuitive $\mr{pSA}$ scheme involves secret sharing. Each participant will be given by a trusted dealer at the onset of scheme a secret share of zero for each time step. Each agent will use this share to mask its local data, like a one-time pad--see Appendix~\ref{app:SS}. The aggregator will then sum all the contributions it receives and obtain the desired sum, as the shares of zero will cancel out. The idea of using shares of zero to additively mask  private values in aggregation problems was explored, e.g., in~\cite{Castelluccia2009efficient,Li2014efficient,Bonawitz2017practical,Tjell2020private}.

A private sum aggregation scheme should 
consist of the following algorithms, $\mr{pSA_1} = (\mr{Setup}, \mr{Enc}, \mr{AggrDec})$:
\begin{itemize}
\item $\mr{Setup}(1^\kappa, M, w_{i\in[M]},T)$: take as input the security parameter $\kappa$, the number of agents $M$ and time period $T$, and output public parameters $\mr{prm}$, secret information for each agent $\mr{sk}_i$, $i\in[M]$ and for the aggregator $\mr{sk}_a$. 
This happens as follows: let $N =\max(\kappa,2l + M)$, then, for each time step $t\in[T]$, generate $M+1$ shares of zero: 
\[\sum_{i\in[M]\cup\{a\}} s_i(t) = 0 ,\quad s_i(t)\in \ZN.
\]
Set $\mr{prm} = (\kappa,M)$, $\mr{sk}_i = (s_i(t), w_{i})$, $\mr{sk}_a = (s_a(t))$.

\item $\mr{Enc}(\mr{prm},\mr{sk}_i,t,x_{i}(t))$: take as input the public parameters, agent~$i$'s secret information, the time step and \mbox{the local}~private value. 
Set $v_{i}(t) = w_{i}x_i(t)$ and 
compute the ciphertext:
\[c_i(t) =  v_{i}(t) + s_i(t) \in \ZN.\]

\item $\mr{AggrDec}(\mr{prm},\mr{sk}_a,t,\{c_i(t)\}_{i \in[M]})$: take as input the public parameters, the aggregator's secret information, the time 
step and the ciphertexts of the agents for that time step. The aggregator obtains $x_a(t)=\sum_{i\in[M]}v_{i}(t)$, as follows: 
\begin{align*}
x_a(t) = s_a(t) + \sum_{i\in[M]} c_i(t) \bmod N.
\end{align*}
\end{itemize}

The correctness of $\mr{pSA_1}$ is based on the correct generation of the random shares of zero in $\mr{Setup}$, that cancel out after aggregation. \mbox{Aggregator} obliviousness is based on the perfect security of masking the private data in $\mr{Enc}$ by a one-time pad. 

The scheme $\mr{pSA_1}$ requires a different set of shares of zero for every time step (otherwise, partial information such as differences between private contributions at different time steps is leaked), which can involve elaborate communication, as we will see in Section~\ref{sec:distributed}. On the other hand, the $\mr{pSA_2}= (\mr{Setup}, \mr{Enc}, \mr{AggrDec})$ scheme from~\cite{Joye2013scalable}, that we describe 
next, only requires an initial set of shares of zero. This scheme is based on the Paillier cryptosystem~\cite{Paillier99}, see Appendix~\ref{app:AHE}. 
\begin{itemize}
\item $\mr{Setup}(1^\kappa, M,  \{w_{i}\}_{i\in[M]},T)$: generate $p,q$ to be two equal-size primes and set $N = pq$ with $\gcd(\phi(N),N)=1$, $\floor{\log_2 N} =\kappa$. Define a hash function that acts as a random oracle $H:\mbb Z \rightarrow \ZNm$. Generate $M+1$ shares of zero:
\[
s_a := -\sum_{i\in[M]} s_i ,\quad s_i\in \ZNm. 
\]
Set $\mr{prm} = (\kappa, N, H)$, $\mr{sk}_i = (s_i, w_i)$, $\mr{sk}_a = (s_a)$.

\item $\mr{Enc}(\mr{prm},\mr{sk}_i,t,x_i(t))$: set $v_{i}(t) = w_{i}x_i(t)$ and output:
\[c_i(t) = (1+N)^{v_{i}(t)} H(t)^{s_i} \bmod N^2.\]

\item $\mr{AggrDec}(\mr{prm},\mr{sk}_a,t,\{c_i(t)\}_{i\in[M]})$: take as input the public parameters, the aggregator's secret information, the time 
step and the ciphertexts of the agents for that time step. The aggregator obtains $x_a(t)=\sum_{i\in[M]}v_{i}(t)$, as follows: 
\begin{align*}
V(t) &= H(t)^{s_a} \prod_{i\in[M]} c_i(t) \bmod N^2\\
x_a(t) &= (V(t) - 1)/N.
\end{align*}
\end{itemize}

The correctness of this scheme follows from the generation of the secret shares and from~\eqref{eq:GammaN} in Appendix~\ref{app:AHE}. The aggregator obliviousness property is proved 
in~\cite{Joye2013scalable}. Furthermore,~\cite{Joye2013scalable} shows that the security of the scheme is not impacted when the hash function $H$ takes values in $\mbb Z/N^2\mbb Z$, not in $\ZNm$. 
\section{Private weighted sum aggregation with centralized weights}\label{sec:known_weighted_sum_aggregation}
When the aggregator knows the weight corresponding to each of the agents $w_i$ (and they are not identical), but the agents do not know them, we cannot reuse the above private sum aggregation schemes. 
We operate under the assumption that the constant weights are not chosen in an adversarial way and there is an auditor that checks them beforehand. In this way, we ensure that, for instance, the weights are not chosen to single out only one piece of local data. 

There are two lines of work that investigate this problem. First, in~\cite{Bickson2010peer}, the authors propose a distributed scenario for aggregation in a graph of agents using a threshold cryptosystem. This implies that after receiving contributions from the agents, the aggregator would ask for help in decrypting the aggregate value. The second line of work removes the extra communication required for decryption. Functional encryption~\cite{Boneh11functional} is a generalization of homomorphic encryption and allows a party to compute a functionality over the encrypted data of another party and obtain the desired solution without decryption. One of the few current practical implementations is the functionality of inner products, where one party holds one input and the other party holds the other~\cite{Agrawal2016fully}. 
Here, we formulate our problem of private weighted sum aggregation with weights known by the aggregator in terms of functional encryption for inner product: $x_a(t) = \braket{[w_1,\ldots,w_M],[x_1(t),\ldots,x_M(t)]}$. 

The definition of aggregator obliviousness for this case can be written as a cryptographic game formalizing the requirements in Section~\ref{sec:problem_statement}. Stronger privacy definitions, from the perspective of functional encryption, can be found in~\cite{Agrawal2016fully}. 

We modify $\mr{pSA}_2$ to get a private weighted sum with centralized weights scheme $\mr{pWSAc} = (\mr{Setup},\mr{Enc},\mr{AggrDec})$:
\begin{itemize}
\item$\mr{Setup}(1^\kappa, \{w_{i}\}_{i\in[M]},T)$: given the security parameter $\kappa$, generate two equal-size prime numbers $p,q$ and set 
$N = pq$ such that $\floor{\log_2 N} = \kappa$ and $\gcd(\phi(N),N)=1$. 
The public key is $\mf{pk} = (N)$. Sample $M$ values $s_i\in\ZNm$ and~set: 
\begin{equation}\label{eq:weighted_shares}
s_a = -\sum_{i\in[M]} w_{i}s_i.
\end{equation}
Choose a hash function that acts as a random oracle $H:\mbb Z \rightarrow \ZNm$ (see~\cite{Agrawal2016fully}). Finally, set $\mr{prm} = (\kappa,N,H)$, $\mr{sk}_i = (s_i)$ and 
$\mr{sk}_a = (\{w_{i}\}_{i\in[M]}, s_a)$.

\item $\mr{Enc}(\mr{prm},\mr{sk}_i,t,x_i(t))$: For $x_i(t)\in \ZN$, compute:
\begin{align*}
c_i(t) = (1+N)^{x_i(t)}\cdot H(t)^{s_i}\bmod N^2.
\end{align*}

\item$\mr{AggrDec}(\mr{prm},\mr{sk}_a,t,\{c_i(t)\}_{i\in[M]}, \{w_{i}\}_{i\in[M]})$: compute 
\[V(t) = H(t)^{s_a}\cdot\prod_{i\in[M]} c_i(t)^{w_{i}}\bmod N^2\]
\[x_a(t) = (V(t)-1)/N. \]
\end{itemize}

Correctness follows after expanding $V(t)$:
\[V(t) = H(t)^{s_a}\cdot(1+N)^{\sum_{i\in[M]} w_{i}x_i(t) } \cdot H(t)^{\sum_{i\in[M]} w_{i}s_i .}\]
Using $s_a = -\sum_{i\in[M]} w_{i}s_i$, we obtain that $(V(t)-1)/N =  \sum_{i\in[M]} w_{i}x_i(t)=x_a(t)$, as needed. 
Aggregator obliviousness follows from the proof in~\cite{Joye2013scalable}, where the secret of the aggregator is now $s_a = -\sum_{i\in[M]} w_{i}s_i$ instead of 
$-\sum_{i\in[M]} s_i $, and the aggregator raises the ciphertexts of the participants to the respective power $w_{i}$. A different proof can be found in~\cite{Agrawal2016fully}.

Notice that the keys are independent of the time period~$T$. The $\mr{Setup}$ step can be performed as follows by a third-party dealer that does not need to know the weights~of~the aggregator. The dealer generates $M$ random secrets $s_i$ and~sends one to each agent. The aggregator generates the public and secret key of an additively homomorphic encryption scheme, e.g., Paillier~\cite{Paillier99}, encrypts the weights $w_i$, for $i\in[M]$ and sends them to the dealer. Then, the dealer computes~\eqref{eq:weighted_shares} as:
\[\mr E(s_a) = \prod_{i\in[M]} \mr E(w_i)^{-s_i},\]
and sends it to the aggregator, which then simply decrypts $s_a$. 
\section{Private weighted sum aggregation with hidden weights}\label{sec:private_weighted_sum_aggregation}
A private weighted sum aggregation scheme for weights unknown to all participants is composed of algorithms $\mr{pWSAh} $ $=  (\mr{Setup},\mr{InitW},\mr{Enc}, \mr{AggrDec})$.  
The formal security definition of aggregator obliviousness is given in Definition~\ref{def:weighted_privacy}~in Appendix~\ref{app:privacy_def} as a cryptographic game. 
This game mimics~the real execution of the scheme, but with a more powerful adversary that can choose both the local data of the corrupted participants and the local data of uncorrupted participants. If even in this case, the adversary is not able to break the privacy of the scheme, then the scheme is private also when multiple participants collude and share their private information, but cannot set the private data of the honest participants.

We first note that, in the context of cooperative control, local control laws of the form~\eqref{eq:aggregation} were considered in \cite{Darup2019encrypted,Alexandru2019encrypted,Alexandru2020private}. 
Specifically, \cite{Darup2019encrypted} introduces a private computation and exchange of the ``input portions'' $\mb v_{i}(t) := \mb W_{i} \mb x_i(t)$, that reveal neither the exact local state $\mb x_i$ nor the local controller matrix $\mb W_{i}$ to the aggregator, but can at least reveal the relative rate of decrease/increase of some signals of the agents over multiple time steps. More details about the information leak can be found in~\cite{Alexandru2019encrypted}, where a solution to the $\mr{pWSAh}$ problem that achieves aggregator obliviousness is proposed. The solution in~\cite{Alexandru2019encrypted} required generating fresh secrets at every time step and proposed an online decentralized method that reduced the collusion threshold between participants. 
Finally, \cite{Alexandru2020private} proposed a theoretical solution that addressed the two issues of~\cite{Alexandru2019encrypted}: it reduced the number of generated secrets, kept the collusion threshold at $M-1$ corrupted participants and reduced the number of messages exchanged between the agents and aggregator. This came at the cost of using a lattice-based homomorphic encryption scheme and augmented learning with error ciphertexts, which can be larger than Paillier ciphertexts and might require more expensive operations. 

Our current work extends the results in~\cite{Alexandru2019encrypted} (and avoids the more complex cryptographic tools in~\cite{Alexandru2020private}) in the following way: we provide an online decentralized method of distributing the secret shares of zero among the agents without reducing the collusion threshold and we propose a more compact and efficient private weighted sum aggregation scheme by packing multiple values in one homomorphic Paillier ciphertext. 

In $\mr{pWSAh}$, the weight $w_{i}$ should be private from~all~participants, so one solution is to encrypt it. Then, agent~$i$ has to be able to send an encryption of the masked product $w_{i}x_i(t)$ to the aggregator, and the latter has to be able to compute and decrypt the result. This suggests the outline in Figure~\ref{fig:diagram_pWSAh}:
\begin{itemize}
\item $w_{i}$ should be encrypted with an additively homomorphic encryption that the aggregator knows how to decrypt;
\item the layer of encryption introduced in $\mr{Enc}$ should be compatible with the inner additively homomorphic layer;
\item the aggregator should not be able to decrypt the individual contributions it receives from the agents, despite having the secret key of the 
homomorphic encryption scheme. 
\end{itemize}

\begin{figure}[htbp]
\begin{center}
    \vspace{-3pt}
    \includegraphics[width=0.45\textwidth]{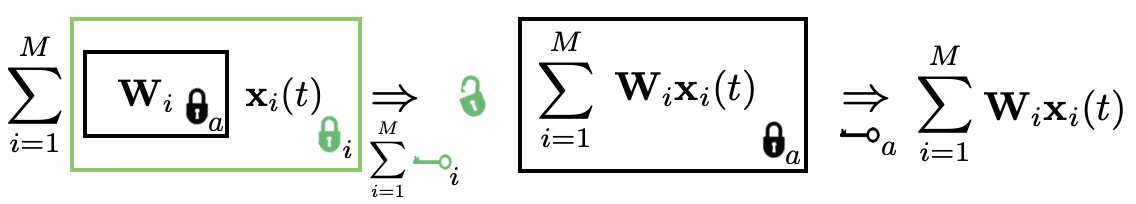}
  \caption{Diagram of the \tr{pWSAh} functionality and privacy requirements.}
  \vspace{-7pt}
   \label{fig:diagram_pWSAh}
 \end{center}
\end{figure}

To achieve the solution, we use a combination of the~two schemes described in Section~\ref{sec:private_sum_aggregation}. For the outer layer~of~encryption, we use one-time pads as in $\mr{pSA_1}$, which are compatible with the additively homomorphic property. For the inner layer of encryption, we use an 
asymmetric additive homomorphic encryption scheme. We instantiate it with the Paillier cryptosystem~\cite{Paillier99}, due to its simplicity and popularity. More details about this cryptosystem can be found in Appendix~\ref{app:AHE}. 

Hence, the steps of the algorithms in $\mr{pWSAh}$ are:
\begin{itemize}
\item $\mr{Setup}(1^\kappa, M, T)$: given the security parameter $\kappa$, get a pair of Paillier keys $(\mf{pk},\mf{sk})$: 
generate two equal-size prime numbers $p,q$ and set $N = pq$ such that $\floor{\log_2 N} = \kappa$ and $\gcd(\phi(N),N)=1$. Set: 
\[\mf{pk} = (N),~\mf{sk} = \big(\phi(N),\phi(N)^{-1}\bmod N\big).\]
For every $t\in[T]$, generate $M+1$ shares of zero: 
\[s_a(t):= -\sum_{i\in[M]} s_i(t),\quad s_i(t)\in (\ZN)^\ast. \]
Finally, set $\mr{prm} = (\kappa,\mf{pk})$, $\mr{sk}_i = (s_i(t\in[T]))$ and $\mr{sk}_a = (\mf{sk}, s_a(t\in[T]))$.

\item $\mr{InitW}(\mr{prm}, M, \{w_{i}\}_{i\in[M]})$: given the public key of the Paillier scheme $pk$, encrypt $w_{i}$ for $i\in[M]$ and return $\mr{sw}_i = \mr{E}(w_{i}) = (1+N)^{w_{i}} r^N \bmod N^2$, for $r$ randomly sampled from $\ZN$ and such that $\gcd(r,N)=1.$

\item $\mr{Enc}(\mr{prm},\mr{sw}_i, \mr{sk}_i,t,x_i(t))$: for $x_i(t)\in \ZN$, compute:
\begin{align*}
c_i(t) = \mr{E}(w_{i})^{x_i(t)} \cdot\mr{E}(s_i(t))
		= \mr{E}(w_{i}x_i(t) + s_i(t)).
\end{align*}

\item $\mr{AggrDec}(\mr{prm},\mr{sk}_a,t,\{c_i(t)\}_{i\in[M]})$: compute $V(t) = \prod_{i\in[M]} c_i(t)\bmod N^2$ and then set: 
\[x_a(t) = \big(\mr{D}(V(t) ) + s_a(t) \big) \bmod N . \]
\end{itemize}

\emph{Correctness}: $\mr{D}(V(t)) = \sum_{i\in[M]} w_{i}x_i(t) + s_i(t) $ follows from the correct execution of Paillier operations in $\mr{Enc}$. 
Then, $\mr{D}(V(t)) + s_a(t) = \sum_{i\in[M]} w_{i}x_i(t) \bmod N = x_a(t)$ from the generation of shares of zero. 

\begin{theorem}\label{thm:scheme_pWSA}
The $\mr{pWSAh}$ scheme achieves weighted sum aggregator obliviousness w.r.t. Definition~\ref{def:weighted_privacy}. \hfill $\diamond$
\end{theorem}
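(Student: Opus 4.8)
The plan is to exhibit a simulator that, given only the public parameters, the corrupted parties' keys, and—when the aggregator is corrupted—the aggregate output $x_a(t)$ at each queried time step, produces a view computationally indistinguishable from the adversary's view in the real execution of the cryptographic game in Definition~\ref{def:weighted_privacy}. The adversary's view consists of: the encrypted weights $\mr{sw}_i = \mr E(w_i)$ for honest $i$, the ciphertexts $c_i(t) = \mr E(w_i x_i(t) + s_i(t))$ for honest $i$ at each queried $t$, and (if the aggregator is corrupt) $\mf{sk}$ and the shares $s_a(t)$. There are two cases to handle, matching the collusion bounds in Section~\ref{sec:problem_statement}: (i) up to $M-1$ corrupted agents but an honest aggregator, and (ii) a corrupted aggregator colluding with up to $M-2$ agents.

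First I would handle case (i). Here the aggregator's secret key $\mf{sk}$ is \emph{not} available to the adversary, so the Paillier layer is intact: semantic security of Paillier (Appendix~\ref{app:AHE}) lets the simulator replace every honest $\mr{sw}_i$ by $\mr E(0)$ and every honest $c_i(t)$ by $\mr E(0)$ via a standard hybrid argument over the honest parties and the queried time steps, with each step's indistinguishability reducing to an IND-CPA challenge. The reduction is routine because $\mr{Enc}$ uses the weight ciphertext only through homomorphic scalar multiplication and addition, so an IND-CPA challenger's ciphertext can be mauled into the required $c_i(t)$. This case needs no use of the shares of zero at all.

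The substantive case is (ii): the aggregator is corrupt and holds $\mf{sk}$, so Paillier decryptions are in the clear and semantic security of the inner layer is useless on those very ciphertexts. Now the argument must rest on the one-time-pad outer layer. The key combinatorial fact is that with at least two honest agents, say $i_1, i_2$, the honest shares $\{s_i(t)\}_{i \text{ honest}}$ are, conditioned on the corrupted parties' shares and on the constraint $\sum_{i\in[M]} s_i(t) = -s_a(t)$, uniformly distributed on the affine subspace of $(\ZN)^{\#\text{honest}}$ (over $\ZN$, not the unit group, since $\mr D$ outputs a residue mod $N$) with that single linear constraint; in particular any $\#\text{honest}-1$ of them are jointly uniform and independent. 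Hence the simulator samples the honest $s_i(t)$ for all but one honest index freely, sets $c_i(t) = \mr E(r_i(t))$ for fresh uniform $r_i(t)\in\ZN$ at those indices, and uses the last honest index $i^\ast$ together with knowledge of $x_a(t)$ to program the final ciphertext so that $\mr D(\prod_i c_i(t)) + s_a(t) = x_a(t)$ holds identically—i.e. it sets $c_{i^\ast}(t) = \mr E\big(x_a(t) - s_a(t) - \sum_{i\neq i^\ast} r_i(t) - \sum_{j \text{ corrupt}} (\text{decryptable contribution})\big)$, absorbing the corrupted agents' contributions which the adversary itself produced. One still invokes Paillier semantic security for the honest $\mr{sw}_i$ (the weights are never exposed to $\mf{sk}$ through any honest ciphertext the adversary gets to decrypt in isolation, only through the aggregated $V(t)$, whose plaintext equals $x_a(t)-s_a(t)$ after the pad is accounted for), so that step is again a hybrid reduction.

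The main obstacle I anticipate is the bookkeeping in case (ii): one must verify carefully that the adversary never obtains, even via collusion, enough linear equations to pin down an honest share. With the aggregator contributing the equation $\sum s_i(t) = -s_a(t)$ and each corrupted agent contributing its own $s_j(t)$, a coalition of the aggregator plus $M-2$ agents knows $M-1$ linear relations among the $M$ shares, leaving exactly one degree of freedom—precisely the $i^\ast$ slot—which is why the threshold is $M-2$ with a corrupt aggregator and $M-1$ otherwise. I would also need to check that reusing the analysis of Section~\ref{sec:private_sum_aggregation} is legitimate: the outer layer here is structurally $\mr{pSA_1}$ applied to the messages $w_i x_i(t)$, so its perfect-masking property carries over verbatim, and the only new ingredient is the Paillier layer, whose security degrades only to computational and only in the honest-aggregator case. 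Finally, since Definition~\ref{def:weighted_privacy} lets the adversary choose honest parties' inputs as well, I should confirm the simulator never needs those inputs except through the output $x_a(t)$, which is exactly what the simulator is handed; this is immediate from the programming step above.
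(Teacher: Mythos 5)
Your two-case split (honest aggregator vs.\ corrupted aggregator) and the way you resolve each case are essentially the paper's proof in simulation/hybrid clothing: case (i) rests on Paillier semantic security alone, case (ii) on the fact that each honest contribution $w_i x_i(t)+s_i(t)$ is one-time-padded by a fresh share, with the adversary's coalition knowing only $M-1$ linear relations among the $M+1$ shares, and the game's constraint $\sum_i w_i^{\mc A,0}x_i^0(t^\ast)=\sum_i w_i^{\mc A,1}x_i^1(t^\ast)$ making the programmed view independent of $b$. That part is sound and matches Appendix~\ref{app:proof}.

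There is, however, a genuine flaw in your case (ii) as written. You place the honest agents' weight ciphertexts $\mr{sw}_i=\mr E(w_i)$ in the adversary's view and then say you ``still invoke Paillier semantic security'' for them via a hybrid. When the aggregator is corrupted the adversary holds $\mf{sk}$, so no reduction to IND-CPA is possible for \emph{any} ciphertext in its view; if $\mr{sw}_i$ of an honest agent really were visible to the coalition, it could simply decrypt it and distinguish $w_i^{\mc A,0}$ from $w_i^{\mc A,1}$, and the theorem would be false. The step is rescued only by observing that the game of Definition~\ref{def:weighted_privacy} (and the real scheme) never gives the adversary an honest agent's $\mr{sw}_i$: the challenger runs $\mr{InitW}$ internally and returns only $\mr{Enc}$ outputs, and in the protocol $\mr{sw}_i$ is delivered solely to agent $i$. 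So the correct fix is to remove $\mr{sw}_i$ (for honest $i$) from the simulated view and drop that hybrid, not to appeal to semantic security. Relatedly, your parenthetical that the weights are exposed ``only through the aggregated $V(t)$'' is incorrect: a corrupted aggregator decrypts each $c_i(t)$ individually and obtains $w_i x_i(t)+s_i(t)\bmod N$; what hides $w_i x_i(t)$ is the per-agent, per-time-step share $s_i(t)$ (this is exactly why fresh shares are needed, cf.\ Remark~1), not any inability to decrypt in isolation. A last cosmetic point: the scheme samples $s_i(t)\in(\ZN)^\ast$, not uniformly in $\ZN$ as your affine-subspace statement assumes; the resulting masking is statistically (not perfectly) close to uniform, with negligible distance, so the conclusion is unaffected but the wording should reflect it.
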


The proof is given in Appendix~\ref{app:proof}.

\begin{remark}\label{rem:shares}
Unlike in $\mr{pSA_2}$, in $\mr{pWSAh}$ the aggregator has to know the secret key of the cryptosystem that encrypts the weights. If we would use $\mr{pSA_2}$, which has a single share of zero per agent for all time steps, an adversary that corrupts the aggregator and selects equal contributions at different time steps for an agent in the $\mr{pWSAO}$ game (described in Appendix~\ref{app:privacy_def}) could learn that agent's secret share.\hfill$\diamond$
\end{remark}

The above scheme is appealing due to its simplicity, but involves demanding communication, because secret shares of zero are required at every time step~$t$ for every participant, as motivated by Remark~\ref{rem:shares}. The $\mr{Setup}$ is executed~by~an~incorruptible trusted third-party, called dealer. This dealer cannot be online at every time step to distribute the shares because, otherwise, this party could act as a trusted aggregator. A more reasonable assumption is that, prior to the online computations, the dealer computes the shares for $T$ time steps and 
sends them to the agents, who have to store them. 
Alternatively, we also offer a solution to generate the secret shares of zero in a distributed way, without the need of a trusted third-party.
\section{Decentralized generation of zero shares}\label{sec:distributed}
\subsection{One communication round, lower collusion threshold}\label{subsec:one_step}

In~\cite{Alexandru2019encrypted}, we offered a solution with one round of communication but lower collusion threshold. This solution~is~appealing in the case where the agents are connected by a dense graph. Specifically, at each time step, each agent would generate~and send shares to the agents in its neighborhood (including the aggregator), then sum up the shares it received from~its~neighbors. This guarantees that all participants will hold a share of zero. However, if the communication graph between the agents is sparse, the collusion threshold drops from $M-1$ participants to the minimum number of neighbors that an agents has. 

\subsection{Two communication rounds, same collusion threshold}\label{subsec:two_steps}

When agents are not sufficiently connected, there are ways of remediating the problem, each with different trade-offs. 
If we are able to enforce new communication links between the agents, we can create dummy connections such that each agent reaches a desired vertex degree. This keeps the same number of communication rounds as before, but it is debatable whether the cost of adding new communication links is reasonable. For instance, if the connections are based on proximity, such a solution is expensive. 
On the other hand, if we relax the number of communication rounds such that each agent obtains a valid share of zero in two rounds, we can retrieve the initial collusion threshold of $M-1$ participants. Instead of sending the shares to each other, the agents will use the aggregator as an intermediate relay to get to the agents that they are not directly connected to. Specifically, each agent $i\in[M]$ will generate $M+1$ 
shares and encrypt them with a key known the agent $l\in([M]\setminus i)\cup a$ (with, e.g., a symmetric encryption like AES). 
The aggregator will forward the corresponding shares to its neighbors $l\neq i$.  

\begin{enumerate}
	\item At time $t-2$, each agent and the aggregator $i \in [M]\cup a$  creates shares of zero $\sigma_{il}(t)\in (\ZN)^\ast$ for itself and for the rest of the agents:
	\begin{equation}\label{eq:sum_pieces}
		\sum_{l\in [M]\cup a }\sigma_{il}(t) = 0. 
	\end{equation}
	It encrypts them with a key known to agent $l\in [M]\cup a \setminus i$ and sends $\mr{AES}(\mr{key}_l,\sigma_{il}(t))$ to the aggregator. 
	\item At time $t-1$, the aggregator batches the $M$ shares for agent~$i\in[M]$ and sends them.
	\item At time $t$, each agent $l\in [M] \cup a$ sums its own share and the shares it received and decrypted from the aggregator:
	\begin{equation}
		s_{l}(t) := \sum_{i\in [M]\cup a} \sigma_{il}(t) \bmod N.
	\end{equation}
\end{enumerate}

In order to reduce the load on the aggregator, an agent~$i$ can communicate directly to agents $l\in \mc N_i\cap [M]$, where $\mc N_i$ is the set of neighbors of agent~$i$ and only sends the encrypted shares to the aggregator for the agents $l\notin \mc N_i\cap [M]$.

There is a very small probability that $s_i(t) \in \ZN \setminus (\ZN)^\ast$, i.e., it is a multiple of $p$ or $q$. In this case, the aggregator might be able to retrieve the encrypted message with a probability $\leq 1/N$ when the encrypted message spans all $\ZN$ and $\leq 1/\min(p,q)$ when the encrypted message is in a smaller space, e.g. on $l$ bits. However, $p,q,N$ have over a thousand bits to ensure semantic security of the Paillier scheme, so this probability is very small ($\leq$ the probability of brute force guessing the message). Nevertheless, we can introduce an extra round of communication to ensure $s_i(t)\in(\ZN)^\ast$, $i\in[M]$: agent $i$ verifies if $\gcd(s_i(t),N)=1$ and if not, it changes the values $\sigma_{ii}(t)$ and $\sigma_{ai}(t)$ such that~\eqref{eq:sum_pieces} is still satisfied and $\gcd(s_i(t),N)=1$, then sends the new $\sigma_{ai}(t)$ to the aggregator. This works because $s_a(t)$ is not used for masking, so it is not required to be in $(\ZN)^\ast$.

\section{Packed Paillier scheme}\label{sec:packed_Paillier}
Assume we have a vector $\mb y = [\mb y^{[1]},\mb y^{[2]},\ldots, \mb y^{[m]}]$, with $\mb y^{[i]} \in [0,~2^l) \cap \mbb Z_{\geq 0}$. 
We can take advantage of the fact that $N>>2^l$, where $N$ is the Paillier modulus by packing $m$ items of $l$ bits into a plaintext in $\mbb Z_N$ in the following way:
\begin{equation*}
p_y = \sum_{i=1}^m \mb y^{[i]} 2^{l(i-1)} = [\mb y^{[1]} | \mb y^{[2]} | \ldots | \mb y^{[m]}].
\end{equation*}
Here, we depict the least significant bits on the left, to show the elements in the order that they appear in the vector. If we need to perform additional operations on $p_y$ after packing, we have to make sure we retrieve the correct elements. Hence, we need to take into account possible overflows from one ``slot" of the ciphertext the another. We do this by padding with enough zeroes, where $\delta > l$ and will be determined based on the computations performed on $p_y$ afterwards:
\begin{equation}\label{eq:packing}
p_y = \sum_{i=1}^m \mb y^{[i]} 2^{\delta(i-1)} = [\underbrace{\mb y^{[1]}}_{l} \underbrace{0 ... 0}_{\delta-l}| \underbrace{\mb y^{[2]}}_{l} \underbrace{0 ... 0}_{\delta-l}| \ldots | \underbrace{\mb y^{[m]}}_{l} \underbrace{0 ... 0}_{\delta-l}].
\end{equation}
Note that we can perform~\eqref{eq:packing} as long as $m\delta < N$. 

In~\eqref{eq:packing}, we require positive integers. For a value $y\in \mbb Q_{l_i,l_f}$, we first construct an integer $\overline y$, where $l:=l_i+l_f$, and then obtain a positive integer $\tilde y$, for $\gamma > l$ that we will specify later:
\begin{align}
    \overline y &:= y2^{l_f}  \Rightarrow \bar y \in [-2^{l-1},2^{l-1}) \cap \mbb Z\label{eq:fractional}\\
\tilde y &:= \overline y + 2^{\gamma}  \Rightarrow \tilde y \in [0,2^{\gamma+1}) \cap \mbb Z_{\geq 0}. \label{eq:negative}    
\end{align}

In Sections~\ref{subsec:multidim_PWSAc} and~\ref{subsubsec:naive}, where we do not use~packing, instead of~\eqref{eq:negative} we use (assuming $2^{l-1} < N/2 $): \begin{equation}\label{eq:negativeN}
\tilde y := \begin{cases}\bar y~~&\text{if~~}\bar y \geq 0\\ \bar y + N~~&\text{if~~}\bar y < 0\end{cases} \Rightarrow \tilde y\in\ZN.
\end{equation}

Batching multiple entries into one Paillier ciphertext was first proposed in~\cite{Ge2007answering}. 
Notice that after packing multiple~plaintexts into one Paillier ciphertext as in~\eqref{eq:packing}, we can still~perform the homomorphisms corresponding to element-wise addition and scalar multiplication. 
In the following, we investigate a more efficient way to compute an encrypted matrix-plaintext vector multiplication, by using only packing, element-wise addition and scalar multiplication. Figure~\ref{fig:column} depicts this method. For a matrix $\mb W\in\mbb R^{m\times n}$, denote the $j$th column by $\mb w^j$, for $j\in[n]$. 
Then, in order to obtain the product $ \mb v:=\mb W\mb x $, we multiply each column $\mb w^j$ by the corresponding element in the vector $\mb x^{[j]}$ and 
then sum over all the obtained vectors:
\begin{equation}\label{eq:sumcolumns}
\mb v = \sum_{j=1}^n \mb w^j\mb x^{[j]}.
\end{equation}

\begin{figure}[!ht]
\begin{center}
  \vspace{-8pt}
    \includegraphics[width=0.24\textwidth]{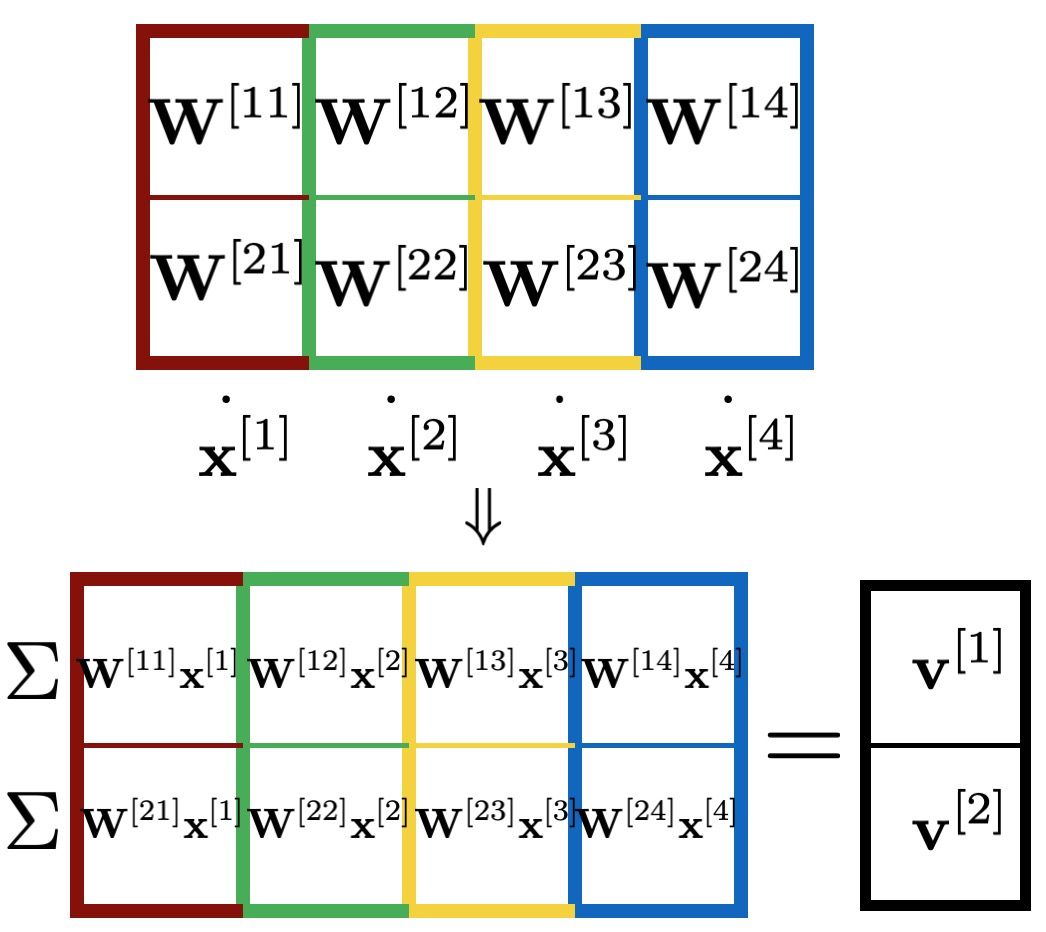}
  \caption{Diagram of column-packed matrix-vector multiplication. The entries with the same outer coloring are packed in the same ciphertext. }
   \vspace{-10pt}
   \label{fig:column}
 \end{center}
\end{figure}
\section{Schemes for multi-dimensional data}
\label{sec:multidim}
\subsection{Multi-dimensional \tr{pSA}}\label{subsec:pSA_multidim}

In the case of $\mr{pSA}_1$, the messages have small sizes and communication is less of a problem even for multi-dimensional data. 
However, in the case of $\mr{pSA}_2$, messages (ciphertexts) are larger and we propose a better method than sending a different message for each element of the resulting vector, by batching the elements in a single ciphertext. 
The aggregator wants to obtain $\mb x_a(t) = \sum_{i\in[M]} \mb W_i\mb x_i(t) $ $ =: \sum_{i\in[M]}\mb v_i(t)$. In $\mr{pSA}_2$, the dealer generates the secret shares the same way as previously, but each agent $i\in[M]$ computes $\mb v_i^{[j]}(t)$ and uses~\eqref{eq:fractional} and~\eqref{eq:negative} to obtain $\tilde{\mb v}_i^{[j]}(t)$, then computes:
\begin{align*}
    p_i(t) &= \sum_{j\in[n_i]} \tilde{\mb v}_i^{[j]}(t) 2^{\delta(j-1)}\\
    c_i(t) &= (1+N)^{p_i(t)} H(t)^{s_i}\bmod N^2.
\end{align*}
The aggregator computes $V(t)$ as before, and retrieves:
\[\tilde{\mb x}_a^{[k]} = V(t)//2^{(n_a-k)\delta} \bmod 2^{(k-1)\delta},~~k\in\{2,\ldots,n_a-1\}\]
and $\tilde{\mb x}_a^{[1]} = V(t) \bmod 2^{\delta}$, $\tilde{\mb x}_a^{[n_a]} = V(t)//2^{(n_a-1)\delta} $, where by $//$ we mean the quotient operation. From the elements of $\tilde{\mb x}_a(t)$, the aggregator needs to subtract $2^\gamma M$ and divide by $2^{2l_f}$ each element, in order to obtain $\mb x_a(t)$.

Choosing $\gamma = 2l+1$ and $\delta = 2l+2+\lceil\log_2 M\rceil$ ensures the correctness of the decryption, as no overflow occurs.

\subsection{Multi-dimensional \tr{pWSAc}}\label{subsec:multidim_PWSAc}

Here, we cannot use packing to reduce the number of ciphertexts because we would require rotations and element-wise multiplications, which cannot be performed on packed Paillier ciphertexts. 
Compared to the \tr{pWSAh} scheme, \tr{pWSAc} has the advantage that only one set of secret shares are needed for all time steps, hence, communication due to secret generation and sharing only happens once.

In the multi-dimensional case, the algorithms change from the ones in Section~\ref{sec:private_sum_aggregation} as described next. The participants prepare their data using~\eqref{eq:fractional} and~\eqref{eq:negativeN}. In \tr{Setup}, $n_aM$ secrets $\mb s_i\in\left((\ZN)^\ast\right)^{n_a}$ are generated and: 
\[\mb s_a^{[k]} = -\sum_{i\in[M]} \sum_{j\in [n_i]} \mb W_{i}^{[kj]} \mb s_i^{[j]} ,~~k\in[n_a].\]
In \tr{Enc}, each agent $i\in[M]$ constructs $n_a$ ciphertexts:
\[ \mb c_i^{[j]}(t) = (1+\tilde{\mb x}_i^{[j]}(t)N)\cdot H(t)^{\mb s_i^{[j]}}\bmod N^2, ~~j\in[n_i].\]
Finally, in \tr{AggrDec}, the aggregator computes for $k\in[n_a]$:
\[\mb V^{[k]}(t) = H(t)^{\mb s_a^{[k]}}\cdot\prod_{i\in[M]}\prod_{j\in [n_i]} \mb c_i^{[j]}(t)^{\tilde{\mb W}_{i}^{[kj]}}\bmod N^2\]
\[\tilde{\mb x}_a^{[k]}(t) = \left(\mb V^{[k]}(t)-1\right)/N. \]

The aggregator retrieves the elements of $\mb x_a(t)$ from $\tilde{\mb x}_a(t)$ by subtracting~$N$ from the elements greater than $N/2$ and dividing all of them by $2^{2l_f}$.

\subsection{Multi-dimensional \tr{pWSAh}}\label{subsec:multidim_scheme}
We consider values on $l$ bits, with $\log_2 N >> l$, for $N$ ensuring semantic security of the Paillier scheme. Sampling a random value from a large $\ZN$ is expensive, but also redundant, since it masks a much smaller message. To this end, we prefer to sample $s_i(t)\in (0,2^{\lambda+2l}), \forall i\in[M]$, for $\lambda$ the statistical security parameter and to set $s_a(t) := -\sum_{i\in[M]} s_i(t)$ in $\mr{pWSAh}$. 
Masking by $s_i(t)$ will guarantee $\lambda$-statistical security rather than perfect security, see Appendix~\ref{app:SS}. From here on, we use this more efficient approach.

\subsubsection{Naive multi-dimensional scheme}
\label{subsubsec:naive}
This solution was proposed in~\cite{Alexandru2019encrypted}. The algorithms change compared to Section~\ref{sec:private_weighted_sum_aggregation} as follows. In \tr{Setup}, for every $t\in[T]$, $M\cdot n_a$ shares of zero $\mb s_i^{[k]}(t)\in (0,2^{\lambda+2l})$ are generated for $i\in[M], k\in[n_a]$:  
\[\mb s_a^{[k]}(t) = -\sum_{i\in[M]}\sum_{k\in[n_a]} \mb s_i^{[k]}(t).
\]
In $\mr{InitW}$, the weights $\mb W_{i}$, $i\in[M]$ are processed as in~\eqref{eq:fractional} and \eqref{eq:negativeN} and encrypted element-wise: $\mr{E}(\mb W^{[kj]}_{i}) = (1+N)^{\tilde{\mb W}^{[kj]}_{i}}r^N$ $\bmod N^2$, for $r$ randomly sampled from $\ZN$ and satisfying $\gcd(r,N)=1$. In \tr{Enc}, each agent $i\in[M]$ computes:
\begin{align*}
\mb c^{[k]}_i(t) &= \prod_{j\in[n_i]}\mr{E}(\tilde{\mb W}^{[kj]}_{i})^{\tilde{\mb x}^{[j]}_i(t)} \cdot \mr{E}(\mb s^{[k]}_i(t))\\
		&= \mr{E}\left(\sum_{j\in[n_i]}\mb W^{[kj]}_{i}\mb x^{[j]}_i(t) + \mb s^{[k]}_i(t)\right),\forall k\in[n_a].
\end{align*}
Finally, in \tr{AggrDec}, the aggregator computes, for $k\in[n_a]$:
\begin{align*}
\mb V^{[k]}(t) &= \prod_{i\in[M]} \mb c^{[k]}_i(t)\bmod N^2\\
\tilde{\mb x}^{[k]}_a(t) &= \mr{D}(\mb V^{[k]}(t) ) + \mb s^{[k]}_a(t). 
\end{align*}

\subsubsection{Packed multi-dimensional scheme}
\label{subsubsec:packed}
We reduce the number of ciphertexts and the corresponding number of operations by using packing and the more efficient encrypted matrix-plaintext vector multiplication described in Section~\ref{sec:packed_Paillier}. 

Assume at the moment that we can pack at least~$n_a$~values in one ciphertext. The steps we take are:
1) Pre-process the values to be positive and integer; 
2) Pack and encrypt~the~columns of the matrix $\mb W_i$ and obtain $n_i$ ciphertexts; 
3) Perform a scalar multiplication of one encrypted column $c$ with the scalar $\mb x_i^{[c]}(t)$; 
4) Sum the $n_i$ ciphertexts to get the encryption of $\mb W_i\mb x_i(t)$; 
5) Add the share of zero and mask the intermediate results; 
6) Sum the $M$ ciphertexts to obtain the encryption~of $\sum_{i\in[M]}\mb W_i\mb x_i(t)$;
7) Decrypt, unmask and unpack the result.

Next, we detail these steps and depict the bit gains due to the operations performed on the packed columns in Figure~\ref{fig:packed_ops}. 

\begin{figure*}[btp]
\centering
   \includegraphics[width=0.76\textwidth]{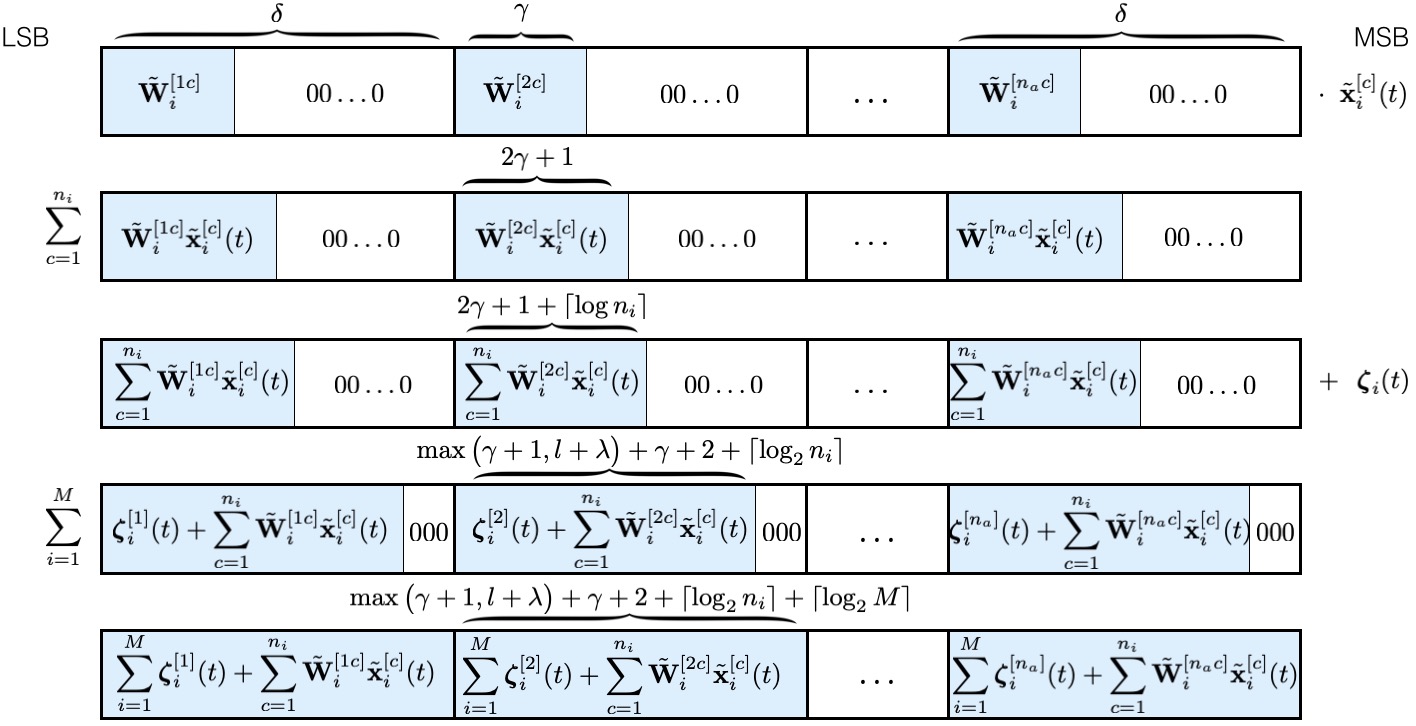}
  \caption{The operations performed on the packed columns and the corresponding number of bits of the result, where on the third line $\boldsymbol{\zeta}_i(t) = \left[\boldsymbol{ \zeta}_i^{[1]}(t) \, 0 \ldots 0 \,\boldsymbol{\zeta}_i^{[2]}(t) \, \ldots \, \boldsymbol{\zeta}_i^{[n_a]}(t) \, 0 \ldots 0 \right]$ and $\boldsymbol \zeta_i^{[k]}(t) = \mathbf s_i^{[k]}(t) + 2^\gamma \mathbf z^{[k]}_i(t)$, for $k\in[n_a]$.}
  \label{fig:packed_ops}
  \vspace{-10pt}
\end{figure*}

1) Prepare the values to be packed as per~\eqref{eq:fractional} and~\eqref{eq:negative}, with~$\gamma$ to be determined later.

2) Construct the packed plaintext $p_{i}^{c}$; encrypt it in $\mr{E}(\mb W_{i}^{c})$:
\[p_{i}^{c} = \sum_{k=1}^{n_a} {\tilde{\mb W}}_{i}^{[kc]} 2^{(k-1)\delta}. \]

3) Multiply the encrypted column $\mr{E}(\mb W_{i}^{c})$ by a pre-processed scalar ${\tilde{\mb x}}_i^{[c]}(t)$. One slot of the ciphertext 
will now contain~\eqref{eq:slotmult} and will be represented on $2\gamma+2$ bits: 
\begin{align}\label{eq:slotmult}
\begin{split}
&({\overline{\mb W}}_{i}^{[kc]} + 2^\gamma)({\overline{\mb x}}_i^{[c]}(t) + 2^\gamma) = \\
&= {\overline{\mb W}}_{i}^{[kc]}{\overline{\mb x}}_i^{[c]}(t) + 2^{2\gamma} + 2^\gamma({\overline{\mb W}}_{i}^{[kc]}+{\overline{\mb x}}_i^{[c]}(t)).
\end{split}
\end{align}
From~\eqref{eq:slotmult}, one can retrieve the desired result ${\overline{\mb W}}_{i}^{[kc]}{\overline{\mb x}}_i^{[c]}(t)$~by taking the lefthand 
side modulo $2^\gamma$. But one can also obtain:
\begin{equation}\label{eq:add_noise}
{\overline{\mb W}}_{i}^{[kc]}+{\overline{\mb x}}_i^{[c]}(t) = \lfloor ({\tilde{\mb W}}_{i}^{[kc]}{\tilde{\mb x}}_i^{[c]}(t) -2^{2\gamma})//2^\gamma \rfloor,
\end{equation}
which can reveal intermediate information to the decryptor. In order to avoid this information leakage, we need to artificially add some noise 
$\mb z^{[kc]}_{i}(t)$ 
that still allows retrieving ${\overline{\mb W}}_{i}^{[kc]}{\overline{\mb x}}_i^{[c]}(t)$. It is more efficient to add this noise in step 5), after performing the sum over $n_i$. 

4) Sum the $n_i$ ciphertexts to output a ciphertext that contains the vector result of the matrix-vector multiplication product. 

5) For each slot $k\in[n_a]$, an agent $i\in[M]$ selects $\mb z^{[k]}_{i}(t) \in (0,2^{l+1+\lambda +\lceil \log_2 n_i \rceil})$, such that a statistical security of $\lambda$ bits is guaranteed for the private value $\sum_{c=1}^{n_i}{\overline{\mb W}}_{i}^{[kc]}+{\overline{\mb x}}_i^{[c]}(t)$. 
Then, each agent constructs its ciphertext $c_i(t)$ by adding the secret shares of zero such that the remaining private value $ \sum_{c=1}^{n_i} {\overline{\mb W}}_{i}^{[kc]}{\overline{\mb x}}_i^{[c]}(t)$ 
is concealed: 
{ \medmuskip=0mu\thickmuskip=1mu
\begin{align}
&c_i(t) := \mb s^{[k]}_{i}(t) + 2^\gamma \mb z^{[k]}_{i}(t) + \sum_{c=1}^{n_i} {\tilde{\mb W}}^{[kc]}_{i}{\tilde{\mb x}}^{[c]}_i(t) \stackrel{\eqref{eq:slotmult}}{=} n_i2^{2\gamma} +\mb s^{[k]}_{i}(t) +\nonumber\\
&+\sum_{c=1}^{n_i} {\overline{\mb W}}_{i}^{[kc]}{\overline{\mb x}}_i^{[c]}(t) +2^\gamma \big( \mb z^{[k]}_{i}(t) +\sum_{c=1}^{n_i}{\overline{\mb W}}_{i}^{[kc]}+{\overline{\mb x}}_i^{[c]}(t) \big).\label{eq:addshare}
\end{align}
}
Due to the masking with $2^\gamma z^{[k]}_{i}(t)$, we can 
reduce the size of the mask $\mb s^{[k]}_{i}(t)$. More specifically, $\mb s^{[k]}_{i}(t)$ can be sampled uniformly at random from $(0,2^\gamma)$, because it acts like a one-time pad (perfect secrecy) on $\sum_{c=1}^{n_i}\overline{\mb W}_{i}^{[kc]}\overline{\mb x}_i^{[c]}(t)$ once the decryptor takes equation~\eqref{eq:addshare} modulo $2^\gamma$. The whole quantity $\mb s^{[k]}_{i}(t) + 2^\gamma \mb z^{[k]}_{i}(t)$ is used for masking, but we only need to ensure that $\sum_{i\in[M]\cup a}\mb s^{[k]}_{i}(t) = 0$. 

6) The aggregator obtains $c(t)$ by taking the product of all ciphertexts $c_i(t)$ it received. 

7) The aggregator decrypts the ciphertext $c(t)$, adds its own secret share of zero $\mb s_a(t)$. It then retrieves the $n_a$ elements of $\tilde{\mb x}_a(t)$ by recursively taking the quotient and rest by $2^\delta$, and from each resulting element, it obtains the elements of $\mb x_a(t)$ by taking modulo $2^\gamma$ and dividing by $2^{2l_f}$.

We now compute the number of bits and corresponding padding one slot can have such that no overflow occurs during the private weighted sum aggregation. We assume that the values of $n_i$, for $i\in[M]$ are similar and define $n:=\max\limits_{i\in[M]}n_i$. 
The value that is packed in slot $k\in[n_a]$ after step 6) is:
\[
\sum_{i=1}^M \left(\mb s^{[k]}_{i}(t) + 2^\gamma \mb z^{[k]}_{i}(t) + \sum_{c=1}^{n_i} \tilde{\mb W}_{i}^{[kc]} \tilde{\mb x}_i^{[c]}(t)\right)  < 2^\delta,
\]
from which we obtain that:
\[\delta > \max\big(\gamma + 1, l+ \lambda \big)+ \lceil \log_2 n \rceil + \lceil \log_2 M \rceil + \gamma + 2.\]
For the correct retrieval of the desired result, we require that:
\begin{align}\label{eq:retrieve_gamma}
\begin{split}
&\sum_{i=1}^M \sum_{c=1}^{n_i} \overline{\mb W}_{i}^{[kc]} \overline{\mb x}_i^{[c]}(t) < 2^\gamma.
\end{split}
\end{align}
From~\eqref{eq:retrieve_gamma}, we choose $\gamma=2l + 1 + \lceil \log_2 n \rceil + \lceil \log_2 M \rceil$ and: 
\begin{align}\label{eq:delta}
\begin{split}
\delta=&\max\big(l+ 2+ \lceil \log_2 n \rceil + \lceil \log_2 M \rceil, \lambda \big) +  
\\ +&3l + 4+ 2(\lceil \log_2 n \rceil + \lceil \log_2 M \rceil).
\end{split}
\end{align}

\begin{figure}[!b]
\centering
\vspace{-10pt}
    \includegraphics[width=0.425\textwidth]{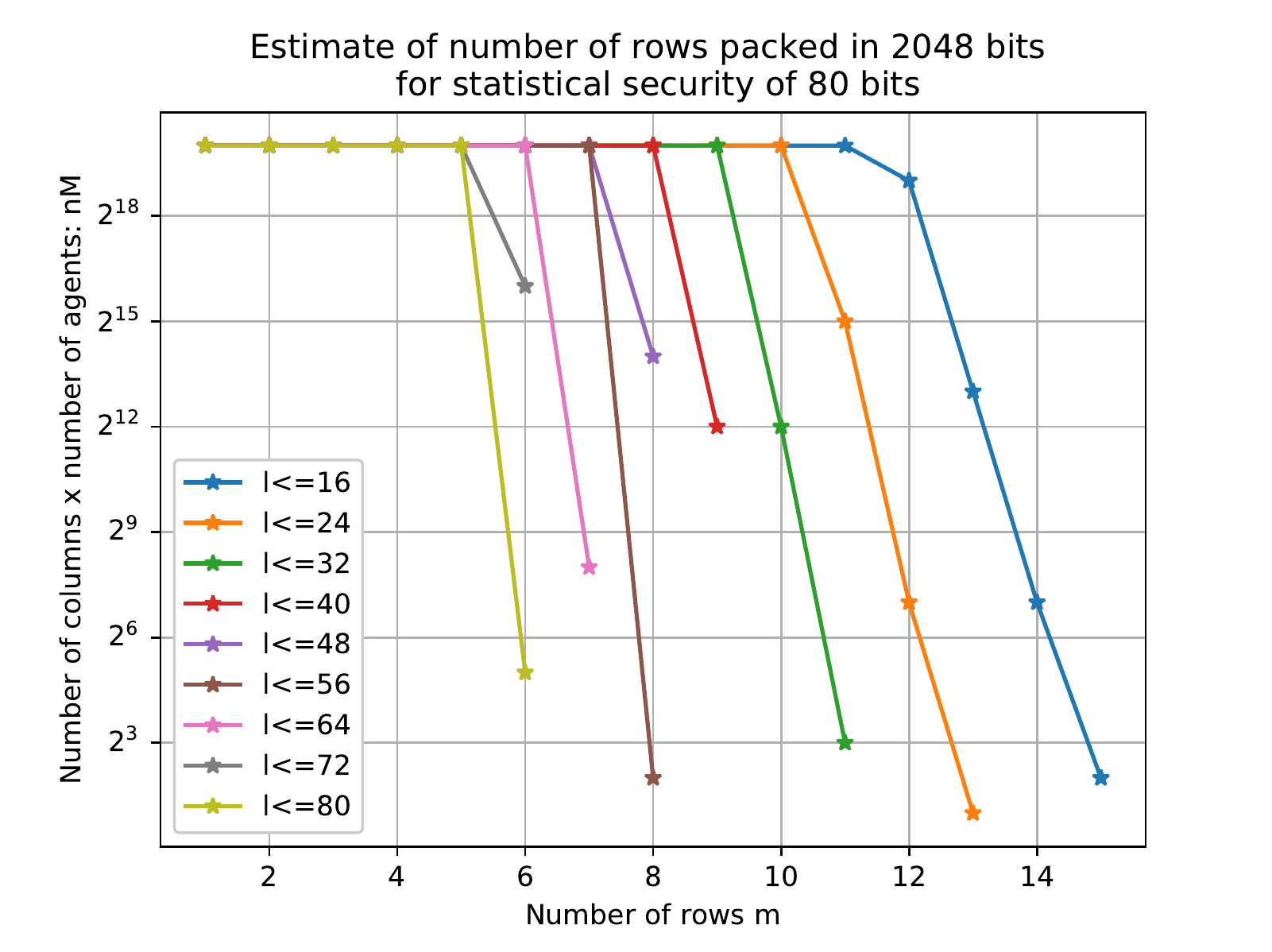}
  \caption{The number of rows $m$ that can be packed in a plaintext of $N=2048$ bits, as a function of the number of columns $n$, number of agents $M$, size of message $l$ and statistical security of $\lambda = 80$ bits.}
   \label{fig:sizes}
\end{figure}

Figure~\ref{fig:sizes} indicates possible values for the number of rows, columns and agents depending on the plaintext size~and~statistical security. Denote the maximum number of values we can pack by $m < N/\delta$. If $n_a > m$, we split the columns into $\lceil n_a/m\rceil$ Paillier ciphertexts and follow the same operations as before, and concatenate the resulting vectors after decryption. 

Let $\mr{pWSAh}^\ast = (\mr{Setup},\mr{InitW}, \mr{Enc}, \mr{AggrDec})$ be a packed multi-dimensional scheme for private weighted sum aggregation with hidden weights, where steps 1) and 2) are performed by the dealer as part of \tr{InitW} and the shares of zero for step 5) are generated as part of \tr{Setup}, steps 1), 3)--5) are performed by each agent $i\in[M]$ in \tr{Enc} and steps 6) and 7) are performed by the aggregator in \tr{AggrDec}. 

\begin{theorem}\label{thm:correctness_multi}
The packed multi-dimensional scheme $\mr{pWSAh}^\ast$ is correct and achieves aggregator obliviousness.
\end{theorem}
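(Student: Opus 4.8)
The plan is to decompose the claim into two parts — correctness and aggregator obliviousness — and handle them in that order, since correctness is essentially a bookkeeping argument about modular arithmetic over packed slots, while obliviousness reduces to the security of the constituent primitives.

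\emph{Correctness.} First I would fix a slot index $k\in[n_a]$ and trace the value carried in that slot through steps 1)--7). The packed plaintext after step 6) is the quantity $\sum_{i=1}^M \big(\mb s^{[k]}_{i}(t) + 2^\gamma \mb z^{[k]}_{i}(t) + \sum_{c=1}^{n_i} \tilde{\mb W}_{i}^{[kc]} \tilde{\mb x}_i^{[c]}(t)\big)$, and the key point is that by the choice of $\delta$ this is strictly less than $2^\delta$, so no carry propagates into slot $k+1$; this is exactly the inequality on $\delta$ derived before the statement, and I would just cite it. Then, expanding via~\eqref{eq:slotmult} and~\eqref{eq:addshare}, after the aggregator adds $\mb s_a(t)$ the $\mb s^{[k]}_i$-terms telescope to zero (shares of zero, generated in \tr{Setup}), and taking the result modulo $2^\gamma$ kills the $n_i 2^{2\gamma}$ and $2^\gamma(\cdot)$ terms, leaving $\sum_{i=1}^M\sum_{c=1}^{n_i}\overline{\mb W}_{i}^{[kc]}\overline{\mb x}_i^{[c]}(t)$, which is $<2^\gamma$ by~\eqref{eq:retrieve_gamma}, hence recovered exactly. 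Finally, unpacking via repeated quotient/remainder by $2^\delta$, subtracting the offset, and dividing by $2^{2l_f}$ returns $\mb x_a^{[k]}(t)$ as in~\eqref{eq:aggregation}; the splitting into $\lceil n_a/m\rceil$ ciphertexts when $n_a>m$ changes nothing since the blocks are processed independently and concatenated. I would also note that Paillier decryption itself is correct because the ciphertext $c_i(t)$ is a genuine Paillier encryption of the claimed plaintext (the $\mr{E}(\cdot)^{\tilde{\mb x}}$ and $\mr{E}(\cdot)\cdot\mr{E}(\cdot)$ operations are the standard homomorphisms), and Assumption~\ref{assum:no_overflow} together with the derived bound on $\delta$ guarantees the total packed plaintext stays below $N$.

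\emph{Aggregator obliviousness.} Here I would argue that $\mr{pWSAh}^\ast$ is obtained from $\mr{pWSAh}$ by a sequence of transformations each of which preserves the obliviousness established in Theorem~\ref{thm:scheme_pWSA}: (i) replacing perfect one-time-pad masks by $\lambda$-statistically-close masks sampled from a bounded range (the reduction loses only a statistically negligible term, as discussed around Appendix~\ref{app:SS}); (ii) packing several scalar encryptions into one ciphertext, which is just a plaintext reorganization and leaks nothing beyond what the individual ciphertexts would — formally, a simulator for $\mr{pWSAh}^\ast$ runs the $\mr{pWSAh}$ simulator slotwise and repacks; and (iii) the added noise terms $2^\gamma \mb z^{[k]}_i(t)$, whose purpose is precisely to hide the intermediate quantity $\sum_c \overline{\mb W}_i^{[kc]}+\overline{\mb x}_i^{[c]}(t)$ exposed in~\eqref{eq:add_noise}, with the range of $\mb z$ chosen so that the convolution is $\lambda$-statistically flat over the support of that quantity. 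Collusion of the aggregator with up to $M-2$ agents (or $M-1$ agents without the aggregator) is handled exactly as in the proof of Theorem~\ref{thm:scheme_pWSA}: the single remaining honest agent's contribution in each slot is still masked by a fresh $\mb s^{[k]}_i(t)$ that the coalition cannot reconstruct, since the shares of zero are regenerated every time step (cf. Remark~\ref{rem:shares}).

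\emph{Main obstacle.} The routine part is correctness; the delicate part is step (iii) of the obliviousness argument — verifying that adding $2^\gamma\mb z^{[k]}_i(t)$ with the stated range genuinely makes the leaked slot~\eqref{eq:add_noise} statistically independent of the sensitive product, \emph{simultaneously} with the one-time-pad mask $\mb s^{[k]}_i(t)$ concealing the low-order part, without the two masks interfering. I would make this precise by splitting each slot into its low $\gamma$ bits and high bits, showing the low part is a perfect OTP and the high part is a sum of a bounded-range uniform $\mb z$ and a value of at most $l+1+\lceil\log_2 n_i\rceil$ bits, so the statistical distance is $\le 2^{-\lambda}$ by the standard smoothing bound; assembling these per-slot, per-agent bounds into a single hybrid argument over the whole packed ciphertext is where the care is needed, but no new idea beyond a union bound over the at most $n_a$ slots is required.
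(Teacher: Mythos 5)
Your proposal is correct and follows essentially the same route as the paper's (much terser) proof: correctness from the choice of $\gamma$ and $\delta$ in~\eqref{eq:retrieve_gamma}--\eqref{eq:delta} guaranteeing no slot overflow and exact recovery after cancelling the shares of zero, and aggregator obliviousness by reduction to Theorem~\ref{thm:scheme_pWSA} together with the masking of the intermediate cross terms in~\eqref{eq:addshare}. Your additional care about the statistical (rather than perfect) masks and the per-slot hybrid over the $\mb z^{[k]}_i(t)$ noise is a faithful elaboration of what the paper leaves implicit, not a different argument.
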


\begin{proof}
The correctness follows from the appropriate padding and packing to avoid overflow, as stated in~\eqref{eq:delta}. The aggregator obliviousness proof follows from Theorem~\ref{thm:scheme_pWSA}, along with the intermediate values masking from~\eqref{eq:addshare}.
\end{proof}

\subsection{Comparison between naive and packed method}

In the naive version of the multi-dimensional \tr{pWSAh}, each agent receives $n_an_i$ ciphertexts for $\mb W_{i}$ at the initialization of the protocol, then computes $n_an_i$ ciphertext--scalar multiplications (modular exponentiation), $n_a(n_i-1)$ ciphertext additions (modular multiplications) and sends to the aggregator $n_a$ ciphertexts. In the decentralized way of generating shares, each agent will have to send out $(2l+\lambda)n_a$ bits of randomness to each of the $M$ neighbors per time step.

Denote by $m$ the number of elements we can pack in a Paillier ciphertext. 
In the packed version $\mr{pWSAh}^\ast$, each agent receives $\lceil n_a/m \rceil n_i$ ciphertexts for $\mb W_{i}$ at the initialization of the protocol, then computes $\lceil n_a/m \rceil n_i$ ciphertext--scalar multiplications, $\lceil n_a/m \rceil (n_i-1)$ ciphertext additions and sends to the aggregator $\lceil n_a/m \rceil$ ciphertexts. In the decentralized way of generating shares, each agent will have to send out $(2l +1+\lceil n \rceil + \lceil M \rceil)\lceil n_a/m \rceil$ bits of randomness to each of the $M$ neighbors per time step.
\section{Case study: private distributed control}\label{sec:simulations}
For illustration purposes, we consider a distributed linear control scheme for linear dynamics of $M$ agents in a network:
\begin{equation}\label{eq:dynamics}
	\mb x_i(t+1) = \mb A_{i} \mb x_i(t) + \mb B_i \mb u_i(t), \quad \mb x_i(0) = \mb x_{i,0},
\end{equation}
with $\mb x_i \in \mbb R^{n_i}$ and $\mb u_i \in \mbb R^{m_i}$, for every $i\in[M]$. 
The agents are part of an undirected connected communication graph $\mc G = (\mc V, \mc E)$, with vertex set $\mc V = [M]$ and edge set $\mc E \subseteq \mc V \times  \mc V$. An edge $(i,j) \in \mc E$ specifies that agent~$i$ can communicate with agent~$j$, i.e., agent~$i$ and agent~$j$ are neighbors. 

We can use the local control laws to stabilize the systems: 
\begin{equation}\label{eq:control}
	\mb u_i(t) = \mb K_{ii} \mb x_i(t) + \sum_{j\in\mc N_i} \mb K_{ij} \mb x_j(t),
\end{equation}
where $\mc N_i := \{j\in \mc V | (i,j) \in \mc E\}$ represents the set of neighbors of agent~$i$. 
The stabilizing local control feedback gains $\mb K_{ij}$ can be designed to take into account the structural constraints of the communication graph, see, e.g.~\cite{Lin2011}. 

In this illustrative example, each agent~$i$ is the aggregator of the contributions of its neighbors $\mb K_{ij} \mb x_j(t)$. There is a system operator that acts as the dealer, who designs and encrypts the control feedback weights $\mb K$. 
Specifically, consider a network of 50 agents, with each agent having local states and local control inputs both of dimension 6. We simulate \tr{pWSAh}$^\ast$ for various values of the average node degree in the network, obtained by varying the probability of drawing edges between agents. 
Simulations were run in Python 3 on a 2.2 GHz Intel Core i7 processor. 
In the simulations, we choose the message representation to be on $l = 32$ bits: 16 integer bits and 16 fractional bits, the statistical security size to be 80 bits and the Paillier moduli for each agent to have 2048 bits. 
With these chosen values, all 6 elements of the local contributions can be encoded into a single Paillier ciphertext in $\mr{pWSAh}^\ast$.

We present the simulation results for the solutions described in Sections~\ref{sec:private_weighted_sum_aggregation},~\ref{subsec:one_step} and~\ref{subsec:two_steps}, in both naive implementation (Section~\ref{subsubsec:naive}) and using packing~(Section~\ref{subsubsec:packed}). The running times are averaged over 50 instances and represent the total time it takes for an agent at a time step to generate and distribute the secret shares for the computation for itself and its neighbors \textit{and} to aggregate the contributions of its neighbors \textit{and} to compute and send out its own contribution to its neighbors. The shares are locally encrypted with an AES cipher with 128-bit key--for the offline centralized phase, each agent has an AES key with the dealer, and each pair of neighbors has their own AES key for the online decentralized phase. 
Figures~\ref{fig:centralized}, \ref{fig:decentralized}, \ref{fig:aggregated} and~\ref{fig:on_aggregated_u} show the times for an agent with the average connectivity degree, respectively the minimum and maximum connectivity degree (represented by the arrows).

\begin{figure}[!t]
 \centering
    \includegraphics[width=0.41\textwidth]{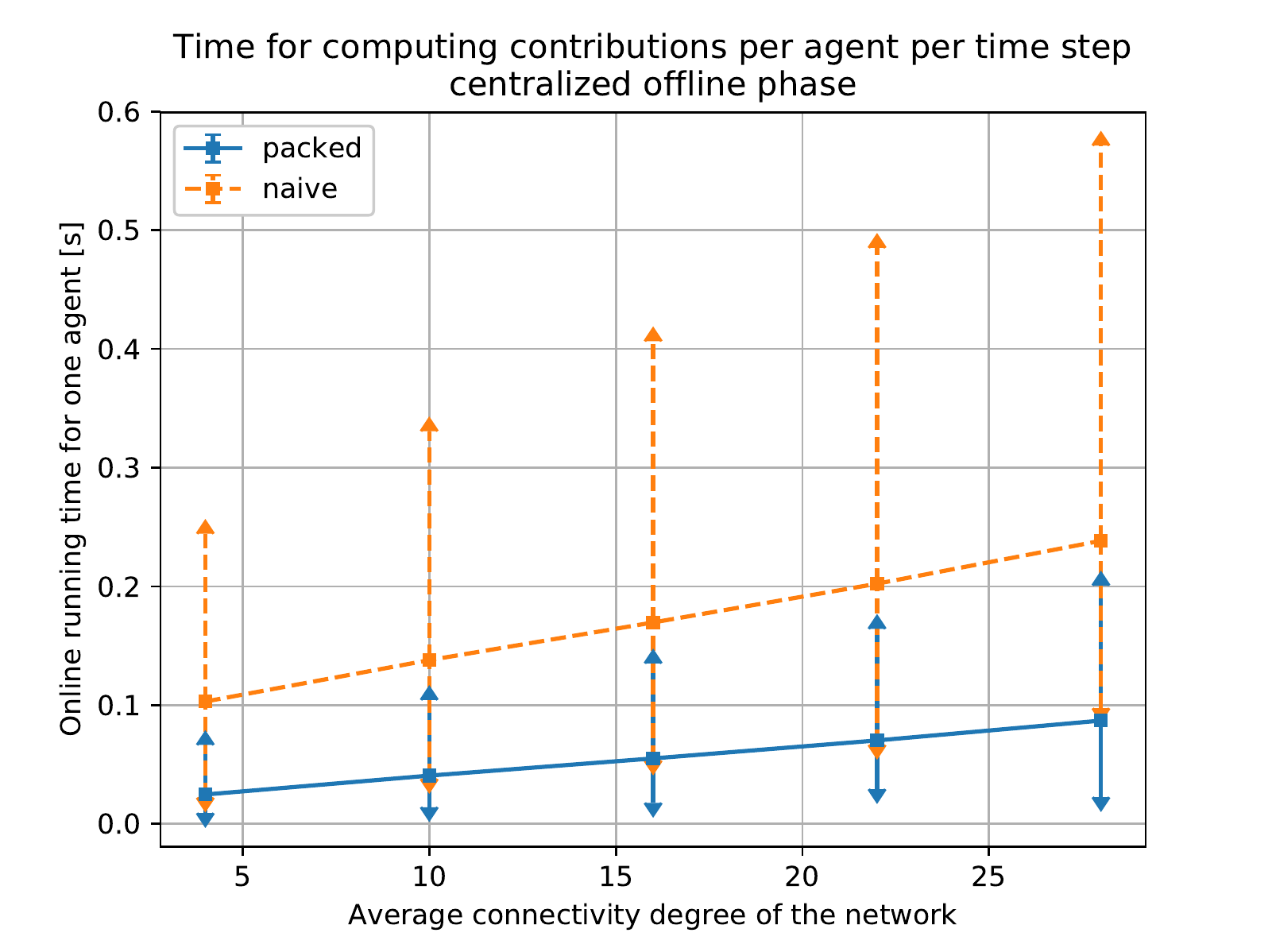}
  \caption{Average running times for the \tr{pWSAh}$^\ast$ scheme with the steps described in Section~\ref{sec:private_weighted_sum_aggregation} in a network of 50 agents.}
   \label{fig:centralized}
   ~
  \centering
    \includegraphics[width=0.41\textwidth]{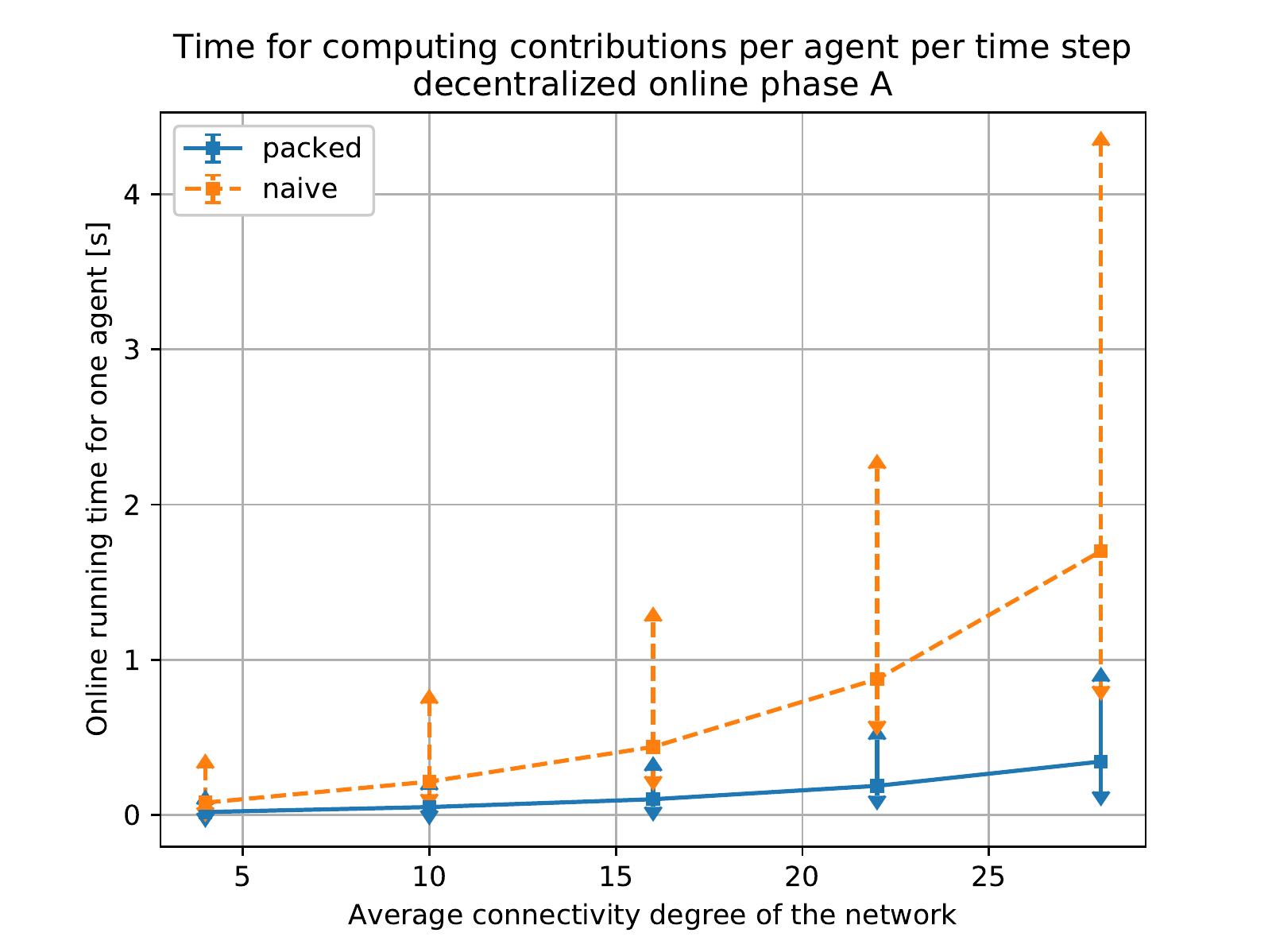}   
  \caption{Average running times for the \tr{pWSAh}$^\ast$ scheme with the steps described in Section~\ref{sec:distributed} in a network of 50 agents.}
  \label{fig:decentralized}
   ~
   \centering
    \includegraphics[width=0.41\textwidth]{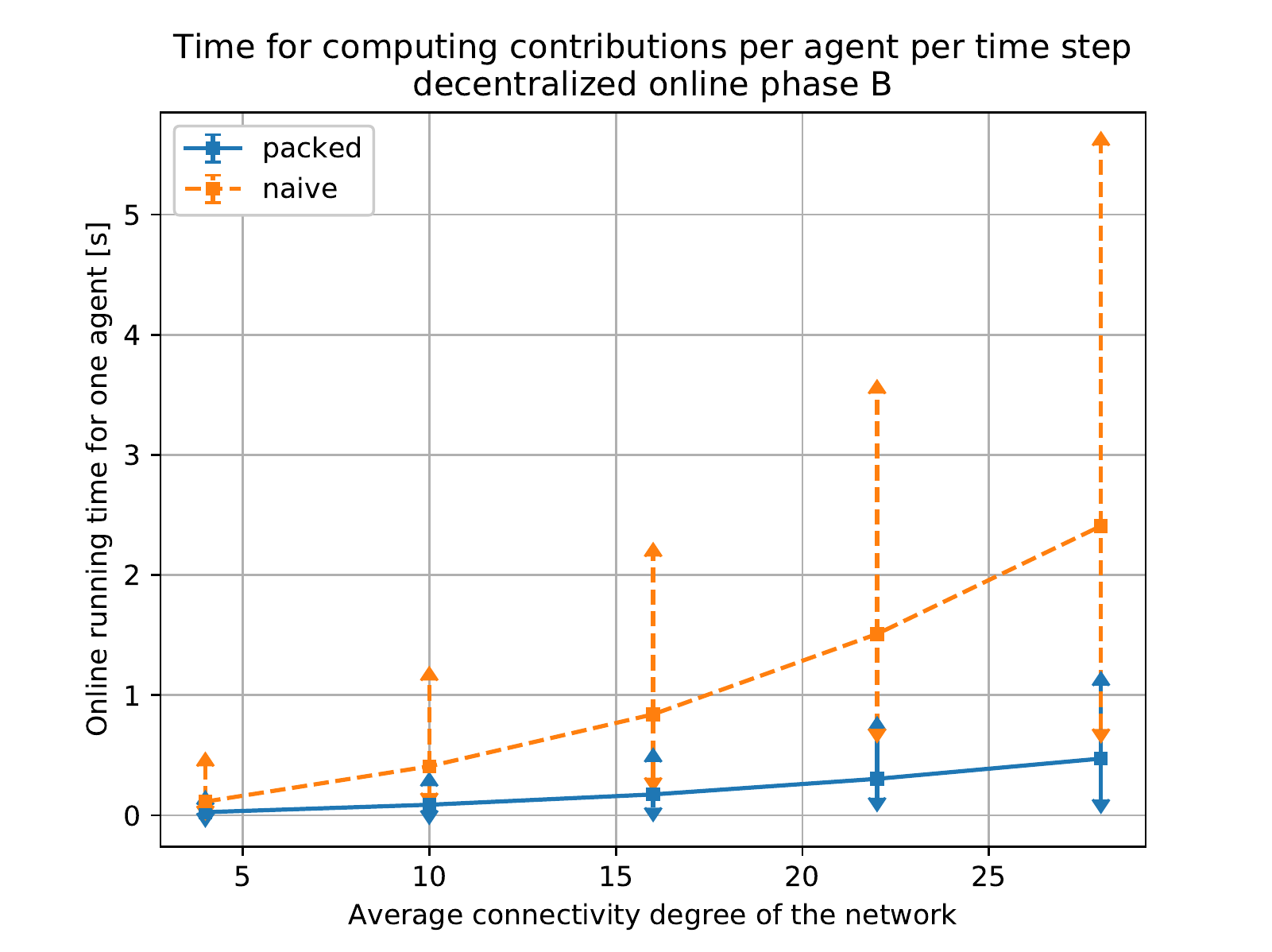}
  \caption{Average running times for the \tr{pWSAh}$^\ast$ scheme with the steps described in Section~\ref{subsec:two_steps} in a network of 50 agents. }
  \vspace{-12pt}
   \label{fig:aggregated}   
\end{figure}

Figure~\ref{fig:centralized} compares the running times for the local computation at each agent in a time step using the scheme~with~centralized offline generation of the secret shares, between~naive~and packed encryption. This method has~the~smallest online running time for agents, but the largest offline time~for the dealer that generates the shares for all agents, for many time steps (see Figure~\ref{fig:off_centralized}). 
Figure~\ref{fig:decentralized} compares the online running times between naive and packed encryption in the case of the one-step decentralized online generation of shares~and~Figure~\ref{fig:aggregated}~for the two-step decentralized online generation of shares. These methods have roughly an eight-fold increase in the online~time compared to the method that makes heavy use of a trusted dealer, but the dealer has less work to do in the offline phase. As expected, less security, in terms of reducing the collusion threshold, yields better online time (Figure~\ref{fig:decentralized} vs. Figure~\ref{fig:aggregated}).

In the centralized offline share generation scheme, packing decreases the maximum online time between 64\% and 71\%. In the online decentralized cases, where the agents are also responsible for generating, encrypting, sending and decrypting the shares, packing reduces the maximum online running time between 76\% and 80\%. Overall, we see that in the packed version, the sampling time needs to be at most 1.1 seconds.

Second, the communication load is reduced when using packing: one Paillier ciphertext of 0.256 KB is sent from the neighboring agents to the aggregating agent, instead of six ciphertexts amounting to 1.536 KB, and each agent sends four batches of 16 bytes AES-encrypted shares to each neighbor, instead of seven batches of 16 bytes.

A third substantial advantage of packing is decreasing the offline time, consisting of the weights encryption and the share generation and encryption at the system operator, depicted in Figure~\ref{fig:off_centralized}. Specifically, we see up to 80\% improvement in the offline running time when using packing.

\begin{figure}[!t]
 \centering
    \includegraphics[width=0.41\textwidth]{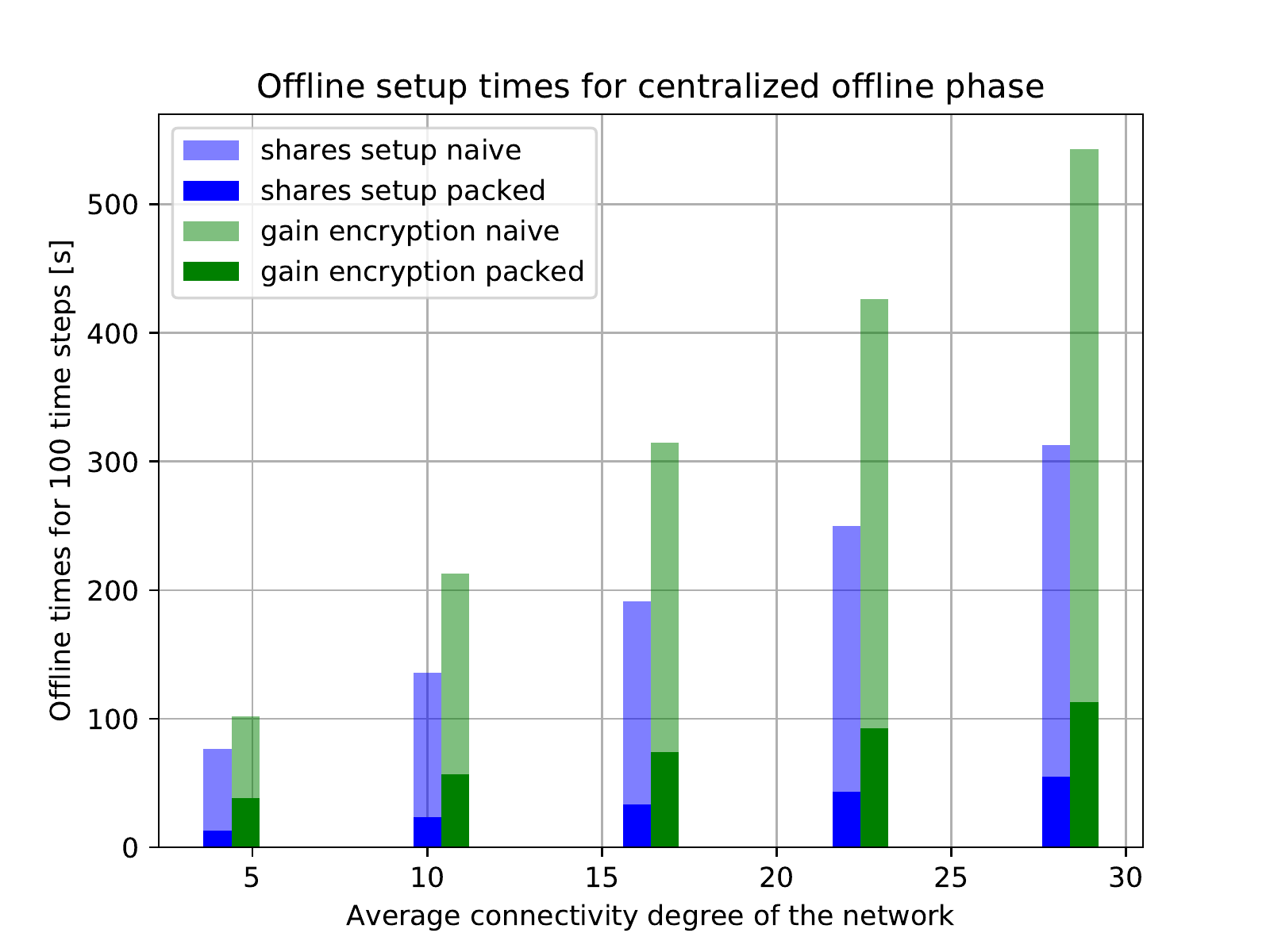}
  \caption{Offline setup phase running times for the \tr{pWSAh}$^\ast$ scheme as in Section~\ref{sec:private_weighted_sum_aggregation} in a network of 50 agents with shares 
  generated for 100 time steps. The offline setup times for the decentralized share generation schemes in Section~\ref{sec:distributed} consist only of the gain encryption times.}
   \label{fig:off_centralized}
   ~
 \centering
    \includegraphics[width=0.41\textwidth]{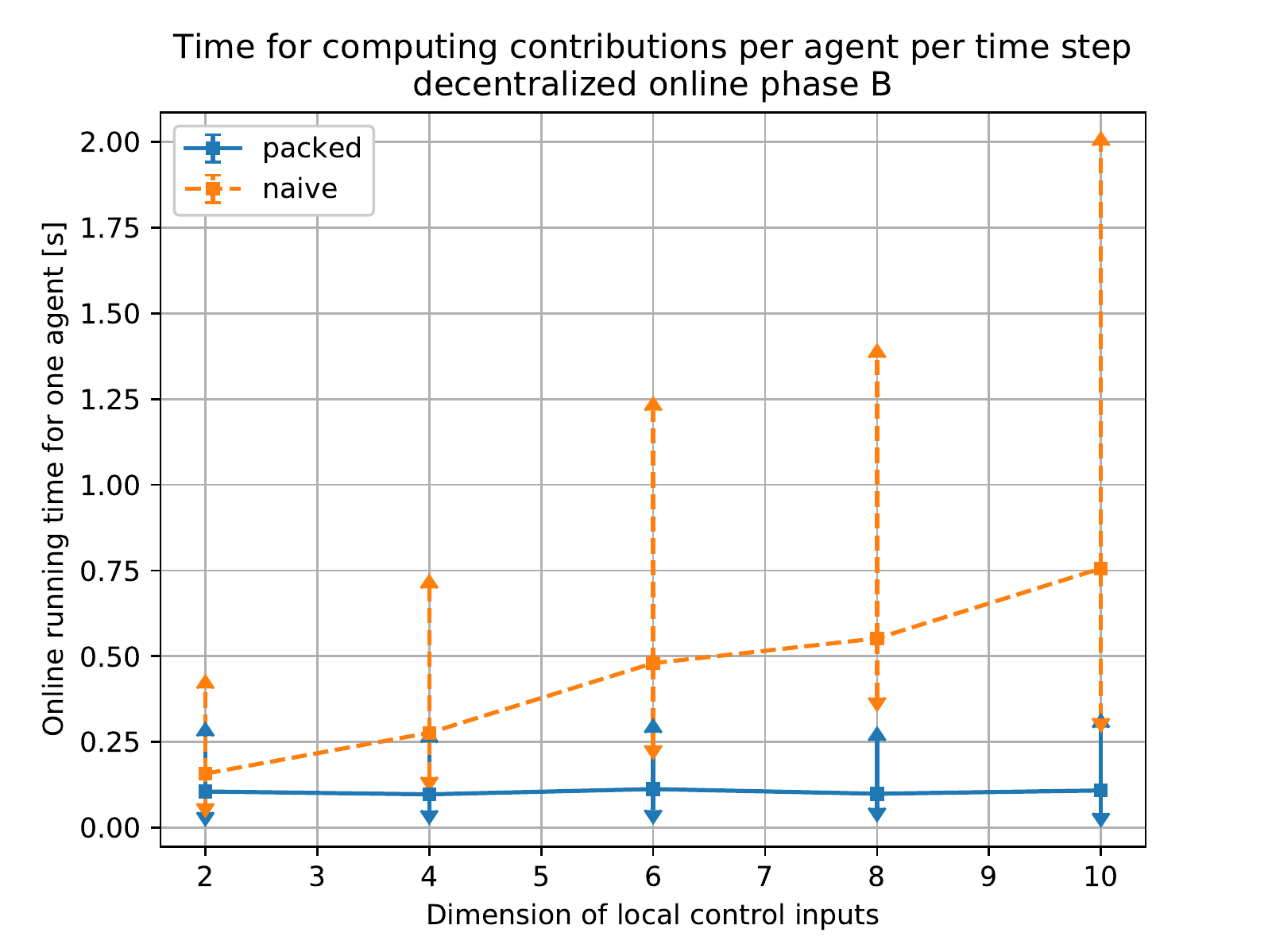}
  \caption{Average running times for the \tr{pWSAh}$^\ast$ scheme with the steps described in Section~\ref{subsec:two_steps} in a network of 25 agents.}
  \vspace{-8pt}
   \label{fig:on_aggregated_u}   
\end{figure}

To further illustrate the efficiency of packing, we show in Figure~\ref{fig:on_aggregated_u} how the performance improves with the number of control inputs, i.e., the number of values that are packed into one value, in the case of decentralized online share generation. We simulate a network of 25 agents with an average network connectivity degree of 10. Each agent has a local state of dimension 10 and local control input of dimension varying between 2 and 10, which are packed in one Paillier ciphertext. The online running time remains almost the same in the packed version, despite having more control inputs, compared to the increasing online time in the naive case.
\section{Future work}\label{sec:conclusion}
In this work, we presented solutions for the problem of private weighted sum aggregation, where an aggregator has to obliviously obtain the sum of the weighted data of some agents. Depending on which participant has access to which piece of information, we can use different efficient solutions that exploit this information distribution. However, providing privacy in the honest-but-curious model might not be enough. We plan to investigate schemes that are private under more realistic security assumptions, namely for malicious adversaries, as well as under drop-out conditions. 

\bibliographystyle{IEEEtran}
\bibliography{IEEEabrv,biblo}
\addtolength{\textheight}{-7.5cm}   
\appendix
\setcounter{theorem}{0}
     \renewcommand{\thetheorem}{\Alph{section}.\arabic{theorem}}
\setcounter{definition}{0}
     \renewcommand{\thedefinition}{\Alph{section}.\arabic{definition}}     
\subsection{Paillier's additively homomorphic encryption scheme}\label{app:AHE}
Consider the additive group of integers modulo $N$, $\ZN$, where $N=pq$ is a large modulus composed of two prime numbers of equal bit-length, $p$ and $q$, 
such that $\gcd(\phi(N),N)=1$. $\phi(N) = (p-1)(q-1)$ is the order of $(\ZN)^\ast$. 
Now consider the multiplicative group of integers modulo $N^2$, $\ZNm$. The order of $\ZNm$ is $N\phi(N)$. 
An important subgroup of $\ZNm$ is:
\begin{align}\label{eq:GammaN}
\begin{split}
	\Gamma_N :=& \{(1+N)^\alpha \bmod N^2 | \alpha \in \{0,\ldots,N-1\}\} \\
	=& \{1+\alpha N | \alpha \in \{0,\ldots,N-1\} \},
\end{split}
\end{align}
where the equality follows from the binomial theorem: \break$(1+N)^\alpha = 1 + \alpha N \bmod N^2 $. Computing discrete logarithms in $\Gamma_N$ is 
easy~\cite{Paillier99,Joye2013scalable}: given $x,y\in \Gamma_N$, we can find $\beta$ such that $y = x^\beta \bmod N^2$ by $\beta = (y - 1)/(x - 1) \bmod N$.

Another important subgroup in $\ZNm$ is:
\begin{align}
\begin{split}
	\mf{G}_N:=& \{x^N \bmod N^2 | x\in (\ZN)^\ast \}.
\end{split}
\end{align}
$\mf{G}_N$ has order $\phi(N)$. Computing discrete logarithms in $\mf{G}_N$ is as hard as computing discrete logarithms in $(\ZN)^\ast$ 
\cite{Joye2013scalable}.

We also have the modular equalities for $x\in\ZNm$:
\begin{equation}\label{eq:lagrange_thm}
	x^{\phi(N)} = 1 \bmod N, \quad x^{N\phi(N)} = 1 \bmod N^2 . 
\end{equation}

The Paillier scheme is defined using the previously described concepts. Specifically, the plaintext space is $\ZN$ and the ciphertext space is $\ZNm$. 
The public key is $(g,N)$, where $g$ is usually selected to be $1+N$, and the secret key $\big(\phi(N),(\phi(N))^{-1}\bmod N\big)$. The encryption is:
\[\mr{E}(x) = g^x r^N \bmod N^2,\quad r\in (\ZN)^\ast.\]
For a ciphertext $c\in\ZNm$, decryption uses~\eqref{eq:GammaN} 
and~\eqref{eq:lagrange_thm}:
\[\mr{D}(c) = (c^{\phi(N)} - 1)/N \cdot \phi(N)^{-1}\bmod N.\]

The Paillier scheme allows for homomorphic additions and multiplication by plaintexts, as follows:
\begin{align*}
	\mr{D}\big(\mr{E}(x)\cdot \mr{E}(y) \big) &= \mr{D}\big(\mr{E}(x+y)\big) = x+y \bmod N\\
	\mr{D}\big((\mr{E}(x))^y\big) &= \mr{D}\big(\mr{E}(xy)\big) = xy \bmod N.
\end{align*}

Under the Decisional Composite Residuosity assumption (i.e., distinguishing between an element from $\mf{G}_N$ and an element from $\ZNm$ is hard), the 
following holds:

\begin{theorem}\label{thm:Paillier}
	The Paillier cryptosystem is semantically secure~\cite{Paillier99}. \hfill$\diamond$
\end{theorem}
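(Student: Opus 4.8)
The plan is to reduce semantic security (indistinguishability under chosen-plaintext attack, IND-CPA) to the Decisional Composite Residuosity (DCR) assumption, following Paillier's original argument. First I would isolate the structural facts that make the scheme work. Since $\gcd(N,\phi(N))=1$, the group $\ZNm$, which has order $N\phi(N)$, is the internal direct product $\Gamma_N \times \mf{G}_N$: here $\Gamma_N$ is the order-$N$ subgroup generated by $1+N$ from~\eqref{eq:GammaN}, $\mf{G}_N$ is the order-$\phi(N)$ subgroup of $N$-th residues, and $\Gamma_N \cap \mf{G}_N = \{1\}$ because the common order must divide both $N$ and $\phi(N)$. Moreover the $N$-th power map $r \mapsto r^N \bmod N^2$ is a group homomorphism onto $\mf{G}_N$ with fibers of constant size, hence maps the uniform distribution on $(\ZN)^\ast$ to the uniform distribution on $\mf{G}_N$. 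Consequently a Paillier ciphertext $\mr{E}(x) = (1+N)^x r^N \bmod N^2$ is exactly $(1+N)^x$ times a uniformly random element of $\mf{G}_N$.

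Next I would run the reduction. Assume a PPT adversary $\mc A$ wins the IND-CPA game against the scheme with non-negligible advantage $\varepsilon$, and build a distinguisher $\mc B$ for DCR. Given $N$ together with a challenge $z \in \ZNm$ drawn either uniformly from $\mf{G}_N$ or uniformly from $\ZNm$, $\mc B$ runs $\mc A$, receives two messages $m_0, m_1 \in \ZN$, samples a bit $b$ uniformly, hands $\mc A$ the challenge ciphertext $c^\ast = (1+N)^{m_b} \cdot z \bmod N^2$, and finally outputs $1$ iff $\mc A$'s guess equals $b$. If $z$ is a uniformly random $N$-th residue, then by the structural facts above $c^\ast$ is distributed exactly as a genuine encryption of $m_b$, so $\Pr[\mc B = 1] = 1/2 + \varepsilon$. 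If instead $z$ is uniform in $\ZNm$, then multiplication by the unit $(1+N)^{m_b}$ keeps $c^\ast$ uniform in $\ZNm$ and independent of $b$, so $\mc A$'s view reveals nothing about $b$ and $\Pr[\mc B = 1] = 1/2$. Hence $\mc B$ has DCR-distinguishing advantage $\varepsilon$, which by the DCR assumption is negligible; therefore $\varepsilon$ is negligible and the scheme is semantically secure.

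The only genuinely load-bearing — and hence error-prone — step is the group-theoretic setup of the first paragraph: establishing the direct-product decomposition $\ZNm = \Gamma_N \times \mf{G}_N$ (which is where $\gcd(N,\phi(N))=1$ and Lagrange's theorem, cf.~\eqref{eq:lagrange_thm}, are used) and verifying that exponentiation by $N$ covers $\mf{G}_N$ uniformly, so that it preserves the relevant uniform distributions. Everything afterwards — bijectivity of multiplication by a unit, and the bookkeeping of the two advantages — is routine. Since the statement is classical, it is also legitimate simply to cite~\cite{Paillier99} for the detailed algebra and devote the write-up to the reduction sketched above.
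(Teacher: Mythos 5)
Your reduction is correct and is exactly the classical argument from the cited source: the paper itself offers no proof of this theorem, stating it as a known result of~\cite{Paillier99}, whose security proof is precisely the DCR-based hybrid you describe (using $\ZNm \cong \Gamma_N \times \mf{G}_N$ under $\gcd(N,\phi(N))=1$ and the uniformity of $r\mapsto r^N$ onto $\mf{G}_N$). So the proposal matches the intended justification; citing~\cite{Paillier99} for the algebraic details, as you suggest, is what the paper does.
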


\subsection{Secret sharing}\label{app:SS}
Secret sharing is a tool that distributes a secret message to a number of parties, by splitting it into random shares. Specifically, \mbox{$t$-out-of-$n$} secret 
sharing splits a secret message into $n$ shares and distributes them to different parties; then, the secret message can be reconstructed by an authorized subset 
of parties, which have to combine at least $t$ shares. 

One common scheme is the additive \mbox{2-out-of-2} secret sharing scheme, which splits a secret message $m$ in a message space $\ZQ$ into two shares by: generating uniformly at random an element $s\in \ZQ$, adding it to the message and then distributing the shares $s$ and $m+s\bmod Q$. 
Both shares are needed in order to recover the secret. 
In some cases (see discussion in Section~\ref{subsec:multidim_scheme}), we prefer to sample $s\in (0, 2^{\log_2 Q+\lambda})$, for a statistical security parameter $\lambda$. 

\begin{theorem}\label{thm:SS}
Secret sharing is:\\ 
(a) perfectly secure when $s\in\ZQ$ ~\cite{Cramer15};\\
(b) $\lambda$-statistically secure when $s\in(0, 2^{\log_2Q+\lambda})$. \hfill$\diamond$
\end{theorem}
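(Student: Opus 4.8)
The plan is to treat the two claims separately; in both cases the only adversarial view that needs analysis is that of the party holding the masked share $c := m + s$, since the other share is just $s$, which is sampled independently of the message $m$ and hence carries no information about it by construction.

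For part (a), I would invoke the classical one-time-pad argument. Fix any two messages $m_0, m_1 \in \ZQ$. For each fixed $m \in \ZQ$, the map $s \mapsto m + s \bmod Q$ is a bijection of $\ZQ$, so when $s$ is uniform on $\ZQ$ the share $c = m + s \bmod Q$ is again uniform on $\ZQ$, independently of $m$; thus $\Pr[c = v \mid m = m_0] = \Pr[c = v \mid m = m_1] = 1/Q$ for every $v \in \ZQ$. This is exactly perfect secrecy of the single-share view, matching the statement in~\cite{Cramer15}.

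For part (b), addition is performed over $\mbb Z$ and $s$ is drawn uniformly from the integer interval $\mc S := (0,\,2^{\log_2 Q+\lambda})$, so that $|\mc S| \ge Q\,2^\lambda - 1$, while the message satisfies $0 \le m < Q$. Then $c = m + s$ is uniform on the translate $m + \mc S$, again an interval of length $|\mc S|$. For messages $m_0 < m_1$, the gap $d := m_1 - m_0$ satisfies $d \le Q-1$, and the two candidate distributions of $c$ are uniform on two length-$|\mc S|$ intervals whose intersection has length $|\mc S| - d$; hence their statistical distance equals $d/|\mc S| \le (Q-1)/(Q\,2^\lambda - 1) \le 2^{-\lambda}$, which is negligible in $\lambda$. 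Combined with the message-independence of the $s$-only view, this yields $\lambda$-statistical security in the sense of the corresponding indistinguishability game.

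The one point that genuinely needs care — and the only place the argument could slip — is the contrast between the two parts: in (a) the wrap-around modulo $Q$ makes the masked share exactly uniform, so the leakage is zero, whereas in (b) there is no wrap-around, the support of $c$ depends on $m$, and the residual leakage must be quantified through the overlap of the two supports. I would therefore state the interval-overlap computation giving statistical distance $d/|\mc S|$ cleanly (in particular checking $d \le Q-1$ so the bound collapses to $2^{-\lambda}$), and note that this is precisely the estimate that licenses the cheaper $(2l+\lambda)$-bit sampling of the masks used in Section~\ref{subsec:multidim_scheme} in place of sampling from the full $\ZN$.
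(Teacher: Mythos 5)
Your proposal is correct and follows essentially the same route as the paper: part (a) is the standard one-time-pad argument (the paper simply cites the reference for it), and part (b) is the usual statistical-smudging computation yielding a $2^{-\lambda}$ bound. The only cosmetic difference is that you bound the statistical distance between the two message-conditioned distributions of $m+s$ directly via the interval overlap ($d/|\mc S|\leq 2^{-\lambda}$), whereas the paper phrases the same estimate as indistinguishability of $m+s$ from a uniform value on the slightly larger interval $(0,2^{\log_2 Q+1+\lambda})$.
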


The proof of (b) follows from computing the advantage an adversary has for distinguishing between $m+s\in (0,2^{\log_2Q+1+\lambda})$ and a uniformly sampled random value $r\in(0,2^{\log_2Q+1+\lambda})$, which is $1/2+2^{-\lambda}$ (the statistical distance between $\ZQ$ and $(0,2^{\log_2Q+1+\lambda})$ is $2^{-\lambda}$).

\subsection{Aggregator obliviousness for \tr{pWSAh}}\label{app:privacy_def}
We give a formal description of the privacy requirements from Section~\ref{sec:problem_statement} as a cryptographic game between an adversary and a challenger, where the adversary $\mc A$ can corrupt agents and the aggregator. The weights $w_{i\in[M]}$ are constant over the time steps, so the adversary is forced to specify constant weights; in particular, the adversary will specify two sets of weights: $w_{i\in[M]}^{\mc A,0}$ and $w_{i\in[M]}^{\mc A,1}$. The security game $\mr{pWSAO}$ (private Weighted Sum Aggregator Obliviousness) is as follows:

\noindent\textbf{Setup.} The challenger runs the $\mr{Setup}$ algorithm and gives the public parameters $\mr{prm}$ to the adversary.

\noindent\textbf{Queries.} The adversary can submit compromise queries and encryption queries that are answered by the challenger. In the case of compromise 
queries, the adversary submits an index $i\in[M]$ to the challenger and receives $\mr{sk}_i$, which means the adversary corrupts agent~$i$. The set of 
the corrupted agents is denoted by~$\mc C$. In the case of encryption queries, the adversary is allowed one query per time step~$t$ and per agent~$i\in[M]$. 
The adversary submits $(i,t,w^{\mc A}_{i},x_i(t))$, where $w^{\mc A}_{i}=\{w^{\mc A,0}_{i},w^{\mc A,1}_{i}\}$, and the challenger first runs $\mr{sw}_i^{\mc A,0}=\mr{InitW}(\mr{prm},i,w^{\mc A,0}_i)$, $\mr{sw}_i^{\mc A,1} = \mr{InitW}(\mr{prm},i,w^{\mc A,1}_i)$ and returns $\mr{Enc}(\mr{prm},\mr{sw}_i^{\mc A,0},\mr{sk}_i,t,x_i(t))$ and $\mr{Enc}(\mr{prm},\mr{sw}_i^{\mc A,1},\mr{sk}_i,$ $t,x_i(t))$. 
The set of participants for which an encryption query was made by the adversary at time~$t$ is denoted by $\mc E(t)$.

\noindent\textbf{Challenge.} The adversary chooses a specific time step $t^\ast$. Let $\mc U^\ast$ denote the set of participants that were not compromised at 
the end of the game and for which no encryption query~was made at time $t^\ast$, i.e., $\mc U^\ast = ([M]\cup \{a\}) \setminus (\mc C \cup \mc E(t^\ast))$. The 
adversary specifies a subset of participants $\mc S^\ast \subseteq \mc U^\ast$. At this time $t^\ast$, for each agent~$i\in \mc S^\ast\setminus \{a\}$, the adversary 
chooses two plaintext series $x_i^0(t^\ast)$ and $x_i^1(t^\ast)$, along with $w^{\mc A,0}_{i}$ and $w^{\mc A,1}_{i}$, and sends them to the challenger. If $\mc S^\ast 
= \mc U^\ast$ and $a\notin \mc S^\ast$, i.e., the aggregator has been compromised, then, the values submitted by the adversary have to satisfy $\sum_{i\in 
\mc S^\ast} w^{\mc A,0}_{i}x_i^0(t^\ast) = \sum_{i\in\mc S^\ast} w^{\mc A,1}_{i}x_i^1(t^\ast)$. The challenger flips a random bit $b\in\{0,1\}$ and computes $\mr{Enc}(\mr{prm},\mr{InitW}(\mr{prm},i,w^{\mc A,b}_{i}),\mr{sk}_i,t,x_i^b(t^\ast))$, $\forall i\in \mc S^\ast$.
The challenger then returns the ciphertexts to the adversary. 

\noindent\textbf{Guess.} The adversary outputs a guess $b'\in\{0,1\}$ on whether $b$ is 0 or 1. The advantage of the adversary is defined as: 
\[\mr{\mb{Adv}}^{\mr{pWSAO}} (\mc A) := \left | \Pr [b' = b] - \frac{1}{2}\right|. \]

The adversary wins the game if it correctly guesses $b$. 

\begin{definition}\label{def:weighted_privacy}
A scheme $\mr{pWSAh} = (\mr{Setup},\mr{Enc},\mr{InitW},$ $\mr{AggrDec})$ achieves \textit{weighted sum aggregator obliviousness} if no probabilistic polynomial-time 
adversary has more than negligible advantage in winning this security game:
\begin{align*}
\mr{\mb{Adv}}^{\mr{pWSAO}} (\mc A) \leq \eta(\kappa).\tag*{$\diamond$}
\end{align*}
\end{definition}

\subsection{Proof of Theorem~\ref{thm:scheme_pWSA}}\label{app:proof}
\begin{proof}
We are going to treat two cases: I, the adversary does not corrupt the aggregator and II, the adversary corrupts the aggregator:
\[\Pr [b'=b] = \frac{1}{2} \Pr [b'=b| i\notin \mc C] + \frac{1}{2} \Pr [b'=b| i \in \mc C].\]

We will consider the stronger case where $\mc S^\ast = \mc U^\ast$; the weaker case where $\mc S^\ast \subseteq \mc U^\ast$ follows.

\noindent I. $a\notin \mc C$. From the compromise queries, the adversary holds the following information $\{\kappa, \mf{pk}, \{s_i(t)\}_{i\in \mc C}, 
\{w_{i}\}_{i\in\mc C}\}_{t\in [T]}$ and $\sum_{i\in \mc U} s_i(t) = -\sum_{i\in\mc C} s_i(t)$, for all $t\in[T]$. From the encryption queries at time~$t$, the adversary 
knows $\{c_i(t) = \mr{E}(w^{\mc A}_{i}x_i(t)) + s_i(t))\}_{i\in \mc E(t)}$. Then, the adversary chooses $t^\ast \in T$ and a series of $\{x_i^0(t^\ast)\}_{i\in\mc U^\ast}$ 
and $\{x_i^1(t^\ast)\}_{i\in\mc U^\ast}$ and receives from the challenger $\{c_i(t^\ast) = \mr{E}(w^{\ast,b}_{i}x_i^b(t^\ast)) + s_i(t^\ast))\}_{i\in \mc U^\ast}$. 

Because the adversary doesn't have the secret key of the Paillier scheme and does not have the individual secrets of the uncorrupted agents, the following holds, where $\eta_1(\kappa), \eta_2(\kappa)$ are negligible functions, according to Theorems~\ref{thm:Paillier} and~\ref{thm:SS}:
\begin{align}\label{eq:inotin1}
\begin{split}
	&\Pr [\mc A \text{ breaks Paillier scheme}] \leq \eta_1(\kappa),\\
	&\Pr[\mc A \text{ breaks secret sharing}] \leq \eta_2(\kappa), \\
	&\Pr[b'=b | i\notin \mc C] \leq \frac{1}{2} + \eta_1(\kappa) \eta_2(\kappa).
\end{split}
\end{align}

\noindent II. $a\in \mc C$. From the compromise queries, the \mbox{adversary} holds the following information $\forall t\in[T]$: $\{\kappa, \mf{pk}, \{s_i(t)\}_{j\in \mc C},$ 
$\{w_{i}\}_{i\in\mc C}, \mf{sk}\}_{t\in [T]}$, and $\sum_{i\in \mc U} s_i(t) = -\sum_{i\in\mc C} s_i(t)$. From the encryption queries, and after using 
$\mf{sk}$ to decrypt, the adversary knows $\{p_i(t) = w^{\mc A}_{i}x(t) + s_i(t)\}_{i\in\mc E(t)}$. Then, the adversary chooses $t^\ast \in T$ and a series of 
$\{x_i^0(t^\ast)\}_{i\in\mc U^\ast}$ and $\{x_i^1(t^\ast)\}_{i\in\mc U^\ast}$, such that $\sum_{i\in\mc U^\ast} w^{\mc A,0}_{i}x_i^0(t^\ast) = \sum_{i\in \mc U^\ast} w^{\mc A,1}_{i}x_i^1(t^\ast)$ and receives from the challenger $\{c_i(t^\ast) = \mr{E}(w^{\mc A,b}_{i}x_i^b(t^\ast)) + s_i(t^\ast) \}_{i\in \mc U^\ast}$. The adversary uses the secret key of the Paillier scheme to decrypt the individual ciphertexts and obtains $p_i(t^\ast) = w^{\mc A,b}_{i}x_i^b(t^\ast) + s_i(t^\ast)\mod N$, for $i\in\mc U^\ast$. Because the secret shares of zero are different for each time $t\neq t^\ast$, the adversary cannot infer information about the challenge query from the previous encryption queries.

Then, the probability that the adversary wins is the probability that the adversary breaks secret sharing:
\begin{align}\label{eq:iin1}
\begin{split}
	&\Pr[\mc A \text{ breaks secret sharing}] \leq \eta_2(\kappa),\\
	&\Pr[b'=b | i\in \mc C] \leq \frac{1}{2} + \eta_2(\kappa).
\end{split}
\end{align}

From~\eqref{eq:inotin1} and~\eqref{eq:iin1}: 
$\mb{Adv}^{\mr{pWSAh}} (\mc A) \leq \eta_2(\kappa)$.
\end{proof}

\end{document}